\newcommand{\parI}{\mathop{\bowtie}}
\newcommand{\seqI}{\mathop{\triangleright}}
\DeclareMathOperator{\demon}{\square}
\DeclareMathOperator{\angel}{\Diamond}
\DeclareRobustCommand{\iscircle}{\mathord{\mathpalette\is@circle\relax}}
\newcommand\is@circle[2]{%
  \begingroup
  \sbox\z@{\raisebox{\depth}{$\m@th#1\bigcirc$}}%
  \sbox\tw@{$#1\square$}%
  \resizebox{!}{\ht\tw@}{\usebox{\z@}}%
  \endgroup
}
\DeclareMathOperator{\statt}{\iscircle_{\mathit{p}}}
\newcommand{\statto}[1]{\iscircle_{p_{#1}}}
\newcommand{\schfont}[1]{\mathcal{#1}}
\newcommand{\sch}{\schfont{S}}
\newcommand{\conv}[1]{\mathrm{conv}\, {#1}}
\newcommand{\catfont}[1]{\mathsf{#1}}
\newcommand{\catC}{\catfont{C}}
\newcommand{\Subs}[2]{\catfont{Sub}_{}}
\newcommand{\Cone}{\catfont{Cone}}
\newcommand{\LCone}{\catfont{LCone}}
\newcommand{\Coh}{\catfont{CohDom}}
\newcommand{\Dcpo}{\catfont{DCPO}}
\newcommand{\Dom}{\catfont{Dom}}
\newcommand{\funfont}[1]{#1}
\newcommand{\funF}{\funfont{F}}
\newcommand{\sfunfont}[1]{\mathrm{#1}}
\newcommand{\Pow}{\sfunfont{P}}
\newcommand{\PP}{\sfunfont{V}}
\newcommand{\Dist}{\sfunfont{V}_{=1,\omega}}
\newcommand{\PDist}{\sfunfont{V}_{\leq 1,\omega}}
\newcommand{\Forg}[1]{\sfunfont{U}_{#1}}
\newcommand{\Id}{\sfunfont{Id}}
\newcommand\adjunct[2]{\xymatrix@=8ex{\ar@{}[r]|{\top}\ar@<1mm>@/^2mm/[r]^{{#2}}
& \ar@<1mm>@/^2mm/[l]^{{#1}}}}
\newcommand\adjunctop[2]{\xymatrix@=8ex{\ar@{}[r]|{\bot}\ar@<1mm>@/^2mm/[r]^{{#2}}
& \ar@<1mm>@/^2mm/[l]^{{#1}}}}
\newcommand\retract[2]{\xymatrix@=8ex{\ar@{}[r]|{}\ar@<1mm>@/^2mm/@{^{(}->}[r]^{{#2}}
& \ar@<1mm>@/^2mm/@{->>}[l]^{{#1}}}}
\newcommand{\pv}[2]{\langle #1, #2 \rangle}
\newcommand{\inl}{\mathrm{inl}}
\newcommand{\inr}{\mathrm{inr}}
\newcommand{\app}{\mathrm{app}}
\newcommand{\const}[1]{\underline{#1}}
\newcommand{\comp}{\cdot}
\newcommand{\id}{\mathrm{id}}
\newcommand{\unfold}{\mathrm{unfold}}
\newcommand{\fold}{\mathrm{fold}}
\newcommand{\sem}[1]{\left \llbracket #1 \right \rrbracket}
\newcommand{\asem}[1]{ \llparenthesis #1 \rrparenthesis}
\newcommand{\Nats}{\mathbb{N}}
\newcommand{\Reals}{\mathbb{R}}
\newcommand{\Rats}{\mathbb{Q}}
\newcommand{\Rz}{\Reals_{\geq 0}}
\newcommand{\Complex}{\mathbb{C}}
\newcommand{\ie}{\emph{i.e.}}
\newcommand{\eg}{\emph{e.g.}}
\newcommand{\prog}[1]{{\mathtt{#1}}}
\newcommand{\upclos}{\mathord{\uparrow}}
\newcommand{\SPrg}{\mathrm{Pr}}
\newtheorem{definition}{Definition}[section]  
\newtheorem{theorem}{Theorem}[section]
\newtheorem{proposition}{Proposition}[section]
\newtheorem{corollary}{Corollary}[section]
\title{An Adequacy Theorem Between Mixed Powerdomains and Probabilistic Concurrency}
\author{Renato Neves
\institute{University of Minho \& INESC-TEC, Portugal}
\email{nevrenato@di.uminho.pt}}
\begin{document}
\maketitle

\begin{abstract}
        We present an adequacy theorem for a concurrent extension of
        probabilistic \textsc{GCL}.  The underlying denotational semantics is
        based on the so-called mixed powerdomains, which combine
        non-determinism with probabilistic behaviour. The theorem itself is formulated
        via M. Smyth’s idea of treating observable properties as open sets of a
        topological space. The proof hinges on a `topological generalisation'
        of K\"onig's lemma in the setting of probabilistic programming (a
        result that is proved in the paper as well).

        One application of the theorem is that it entails semi-decidability
        w.r.t.  whether a concurrent program satisfies an observable property
        (written in a certain form).  This is related to M. Escardó's
        conjecture about semi-decidability w.r.t. may and must probabilistic
        testing.
\end{abstract}

\section{Introduction}
\label{sec:intro}
\textbf{Motivation.} The combination of probabilistic operations with
concurrent ones  -- often referred to as probabilistic concurrency -- has been
extensively studied in the last decades and has various
applications~\cite{jones89,baier00,mislove00,varacca07,mciver13,visme19}. A
central concept in this paradigm is that of a (probabilistic)
scheduler~\cite{escardo04,varacca07}: \ie\ a memory-based entity that analyses
and (probabilistically) dictates how programs should be interleaved. 

Albeit intensively studied and with key results probabilistic concurrency still
carries fundamental challenges.  In this paper we tackle the following
instance: take a concurrent \emph{imperative} language equipped with probabilistic
operations. Then ask whether a given program $P$ in state $s$ satisfies a
property $\phi$, in symbols $\pv{P}{s} \models \phi$.  For example $\phi$ may
refer to the fact that $\pv{P}{s}$ terminates with probability greater than
$\frac{1}{2}$ under \emph{all} possible schedulers (which is intimately connected
to the topic of statistical termination~\cite{hart83,lengal17}). Or dually
$\phi$ could refer to the existence of \emph{at least one} scheduler under
which $\pv{P}{s}$ terminates with probability greater than $\frac{1}{2}$. Such
a question can be particularly challenging to answer, for it involves
quantifications over the universe of scheduling systems, which as we shall see
are \emph{uncountably} many.  An even more demanding type of question arises from
establishing an observational preorder $\lesssim$ between programs in state
$s$. More specifically we write,
\[
        \pv{P}{s} \lesssim \pv{Q}{s} \, \, \text{ iff } \, \, \Big (\text{for all properties $\phi$.}
        \, \, \pv{P}{s} \models \phi \text{ implies } \pv{Q}{s} \models \phi
        \Big )
\]
and ask whether $\pv{P}{s} \lesssim \pv{Q}{s}$ for two programs $P$ and $Q$ in
state $s$. Adding to the quantifications over scheduling systems, we now have a
universal quantification over formulae.

\smallskip
\noindent
\textbf{Contributions.}
In this paper we connect the previous questions to domain-theoretic
tools~\cite{gierz03,larrecq13}.  Concretely we start by recalling necessary
background (Section~\ref{sec:back}). Then in order to provide a formal
underpinning to these questions, we introduce a concurrent extension of the
probabilistic guarded command language (pGCL)~\cite{mciver05} and equip it with
an operational semantics as well as a logic for reasoning about program
properties (Section~\ref{sec:lang}). The logic itself is based on M.  Smyth's
idea of treating observable properties as open sets of a topological
space~\cite{smyth83,vickers89}.  We then introduce a domain-theoretic
denotational semantics $\sem{-}$ (Section~\ref{sec:den}) and establish its
computational adequacy w.r.t.  the operational counterpart
(Section~\ref{sec:adeq}). More formally we establish an equivalence of the
form,
\[
        \pv{P}{s} \models \phi \text{ iff }
        \sem{P}(s) \models \phi
        \hspace{3.5cm}
        \text{(for all properties $\phi$ in our logic)}
\]
whose full details will be given later on. From adequacy we then
straightforwardly derive a \emph{full abstraction} result w.r.t. the
observational preorder $\lesssim$, \ie\ we obtain the equivalence,
\[
        \pv{P}{s} \lesssim \pv{Q}{s} \text{ iff } \sem{P}(s) \leq
        \sem{Q}(s)
\]
for all programs $P$ and $Q$ and state $s$. 

An interesting and somewhat surprising consequence of our adequacy theorem is
that the question of whether $\pv{P}{s} \models \phi$ becomes semi-decidable
for a main fragment of our logic. This includes for example the two questions
above concerning the termination of $\pv{P}{s}$ with probability greater than
$\frac{1}{2}$. What is more, semi-decidability is realised by an exhaustive
search procedure -- which looks surprising because as already mentioned the
space of schedulers is uncountable and the question of whether $\pv{P}{s}
\models \phi$ involves quantifications over this space.  In
Section~\ref{sec:conc} we further detail this aspect and also the topic of
statistical termination~\cite{hart83,lengal17,majumdar24,fu24}.  Also in
Section~\ref{sec:conc}, to illustrate the broad range of applicability
of our results we show  how to
instantiate them to the quantum setting~\cite{feng11,ying18,ying22}. Briefly
the resulting concurrent language treats measurements and qubit resets as
probabilistic operations which thus frames the language in the context of
probabilistic concurrency. To the best of our knowledge the quantum concurrent
language obtained is the first one equipped with an adequacy theorem.  We
then conclude with a discussion of future work.  

\smallskip
\noindent
\textbf{Outline.}
We proceed by outlining our denotational semantics.  First the fact that we are
working with concurrency, more specifically with the interleaving paradigm,
suggests the adoption of the `resumptions model' for defining the semantics.
This approach was proposed by R. Milner in the seventies~\cite{milner75} and
subsequently adapted to different contexts (see
\eg~\cite{stark96,pirog14,goncharov16}). When adapted to our
specific case program denotations $\asem{P}$ are intensional and defined as
elements of the \emph{greatest} solution of the equation,
\begin{align}
        \label{eq:domeq}
        X \cong T(S + X \times S)^S
\end{align}
where $S$ is a pre-determined set of states and $T$ a functor that encodes a
combination of non-determinism (which handles interleaving in concurrency) and
stochasticity. In other words such program denotations are elements of (the
carrier of) a final coalgebra. 

Second we will see that in order to obtain adequacy w.r.t.  the operational
semantics  one needs to extensionally collapse $\asem{-}$ -- \ie\ given a
program $P$ one needs to find a suitable way of keeping only the terminating
states of $\asem{P}$ whilst accounting for the branching structure encoded in
$T$. Denoting $X$ as the greatest solution of Equation~\eqref{eq:domeq}, such
amounts to finding a suitable morphism $\mathrm{ext} : X \to T(S)^S$. Our
approach to this is inspired by~\cite{hennessy79}: such a morphism arises
naturally if $T$ is additionally a monad and $X$ is additionally the
\emph{least} solution of Equation~\eqref{eq:domeq}. Concretely we obtain the
following commutative diagram,
\begin{align}
        \label{eq:diag}
        \xymatrix@C=90pt{
                \Pr 
                \ar@{.>}[r]^(0.45){\asem{-}} 
                \ar[d] \ar@{}[d]_{ \mathrm{op} }
                &
                X 
                \ar@{.>}[r]^(0.45){\mathrm{ext}}
                \ar@/^10pt/[d]^{\unfold}
                &
                T(S)^S
                \\
                T(S + \Pr \times S)^S 
                \ar[r]_(0.5){T(\id + \asem{-} \times \id)^S} 
                & 
                T(S + X \times S)^S  
                \ar[r]_(0.45){T(\id + \mathrm{ext} \times \id)^S} 
                \ar@/^10pt/[u]^{\fold}
                \ar@{}[u]|-{\cong}
                & 
                T(S + T(S)^S \times S)^S
                \ar[u]_{{[\eta^T,\ \app]^\star}^S}
        }
\end{align}
where $\Pr$ denotes the set of programs, the operations $\eta^T$, $(-)^\star$
refer to the monadic structure postulated for $T$, $\app$ denotes the
application map, and $\mathrm{op}$ denotes our operational semantics in 
coalgebraic form. The fact that the left rectangle commutes (proved in
Theorem~\ref{theo:sem_eq}) means that the intensional semantics $\asem{-}$
agrees with the `one-step' transitions that arise from the operational
semantics. The fact that the right rectangle commutes is obtained by
construction (initiality) and yields a recursive definition of the extensional
collapse, with elements of $X$ seen as `resumption trees'.  Our final
denotational semantics $\sem{-}$ is the composition $\mathrm{ext} \comp
\asem{-}$.

In what regards the choice of $T$ there are three natural candidates which are
collectively known as mixed
powerdomains~\cite{mciver01,mislove04,tix09,keimel09}. One is associated with
angelic (or may) non-determinism, another with demonic (or must)
non-determinism, and the other one with both cases. Our computational adequacy
result applies to these three powerdomains, a prominent feature of our result
being that the aforementioned logic (for reasoning about properties) varies
according to whatever mixed powerdomain is chosen. Notably keeping in touch
with its name, the proof for demonic non-determinism is much harder to achieve
than the angelic case: in order to prove the former we establish what can be
intuitively seen as a topological generalisation of K\"onig's
lemma~\cite{franchella97} in the setting of probabilistic programming. This is
detailed further in the paper and proved in the paper's extended
version~\cite{neves24}.

\smallskip
\noindent
\textbf{Related work.}
The sequential fragment of our language is the well-known pGCL~\cite{mciver05},
a programming language that harbours both probabilistic and non-deterministic
choice.  The idea of establishing adequacy between pGCL and the three mixed
powerdomains was propounded in~\cite{keimel11} as an open endeavour -- and in
this regard the adequacy of the angelic mixed powerdomain was already
established in~\cite{varacca07}.  More recently  J.  Goubault-Larrecq
\cite{larrecq19} established computational adequacy at the unit type
between an extension of PCF with probabilistic and non-deterministic choice and
the three mixed powerdomains.  Note however that in this case the associated
operational semantics is not based on probabilistic schedulers like
in~\cite{varacca07} and our case, but rather on what could be called `angelic
and demonic strategies'. Moreover to scale up the adequacy result in the
\emph{op. cit.} to a global store (as required in pGCL) demands that states can
be directly compared in the language, an assumption that we do not make and
that does not hold for example in the quantum setting. We therefore believe
that our adequacy result restricted to the sequential fragment is already
interesting \emph{per se}.

Adequacy at the level of probabilistic concurrency has been studied at various
fronts, however as far as we are aware usually in the setting of process
algebra~\cite{baier00,mislove04,visme19}. This is a direction orthogonal to
ours, in the sense that the latter focusses on interactions and is 
intensional in nature; whilst here we work in the setting of imperative
programming and thus take a more extensional approach.

Semi-decidability concerning may and must statistical termination, in a spirit
similar to ours, was conjectured in~\cite{escardo09}. Note however that the
host programming language in the latter case was a \emph{sequential,
higher-order} language with non-deterministic and probabilistic choice. The
underlying scheduling system in the \emph{op. cit.} was also different from
ours: it was encoded via pairs of elements in the Cantor space $2^\Nats$.

\smallskip
\noindent
\textbf{Prerequisites and notation.}
We assume from the reader basic knowledge of category theory, domain theory,
and topology (details about these topics are available for example
in~\cite{adamek09,gierz03,larrecq13}). We denote the left and right injections
into a coproduct by $\inl$ and $\inr$, respectively, and the application map
$X^Y \times Y \to X$ by $\app$. Given a set $X$ we use $X^\ast$ to denote the
respective set of lists, and given an element $x \in X$ we denote the `constant
on $x$' map $1 \to X$  by $\const{x}$. We omit subscripts in natural
transformations whenever no ambiguity arises from this. We use $\eta$ and
$(-)^\star$ to denote respectively the unit and lifting of a monad
$(T,\eta,(-)^\star)$, and whenever relevant we attach $T$ as a superscript to
these two operations to clarify which monad we are referring to. We denote the
set of positive natural numbers $\{1,2,\dots\}$ by $\Nats_+$.  Omitted proofs
are found in the paper's extended version~\cite{neves24}.

\section{Background}
\label{sec:back}

\smallskip
\noindent
\textbf{Domains and topology.}
We start by recalling pre-requisite notions for establishing the aforementioned
denotational semantics and corresponding adequacy theorem.  As usual we call
DCPOs those partially ordered sets that are directed-complete. We call domains
those DCPOs that are continuous~\cite[Definition I-1.6]{gierz03}. We call a
DCPO pointed it it has a bottom element and we call a map between pointed DCPOs
strict if it preserves bottom elements. Similarly we call a functor on pointed
DCPOs strict if it preserves strict maps. Given a poset $X$ its Scott topology
$\sigma(X)$ consists of all those subsets $U \subseteq X$ that are upper-closed
and inaccessible by directed joins: \ie\ for every directed join $\bigvee_{i
\in I} x_i \in U$ there must already exist some element $x_i$ in the family
$(x_i)_{i \in I}$ such that $x_i \in U$.  Another important topology on $X$ is
the lower topology $\omega(X)$. It is generated by the subbasis of closed sets
$\{\upclos x \mid x \in X \}$. Yet another important topology is the Lawson
topology $\lambda(X)$ which is generated by the subbasis $\sigma(X) \cup
\omega(X)$ (see details in~\cite[Section III-1]{gierz03}). When treating a
poset $X$ as a topological space we will be tacitly referring to its Scott
topology unless stated otherwise. A domain $X$ is called coherent if the
intersection of two compact saturated subsets in $X$ is again compact. Finally
note that every set can be regarded as a coherent domain by taking the discrete
order. We will often tacitly rely on this fact.

Let $\Dcpo$ be the category of DCPOs and continuous maps. Let $\Dom$ and $\Coh$
be the full subcategories of $\Dcpo$ whose objects are respectively domains and
coherent domains. We thus obtain the chain of inclusions $\Coh \hookrightarrow
\Dom \hookrightarrow \Dcpo$. Next, let $\catC$ be any full subcategory of
$\Dcpo$.  Since it is a full subcategory the inclusion $\catC \hookrightarrow
\Dcpo$ reflects limits and colimits~\cite[Definition 13.22]{adamek09}. This
property is very useful in domain theory because domain theoreticians often
work in full subcategories of $\Dcpo$. It is well-known for example that
$\Dcpo$-products of (coherent) domains where only finitely many are non-pointed
are (coherent) domains as well~\cite[Proposition 5.1.54 and Exercise
8.3.33]{larrecq13}.  It follows that such limits are also limits in $\Dom$, and
if the involved domains are coherent then they are limits in $\Coh$ as well.
The same reasoning applies to coproducts: $\Dcpo$-coproducts of (coherent)
domains are (coherent) domains as well~\cite[Proposition 5.1.59 and Proposition
5.2.34]{larrecq13}. Finally observe that $\Dcpo$ is distributive and that this
applies to any full subcategory of $\Dcpo$ that is closed under binary
(co)products. This includes for example $\Dom$ and $\Coh$.

\smallskip
\noindent
\textbf{The probabilistic powerdomain.}
Let us now detail the  probabilistic powerdomain and associated constructions.
These take a key rôle not only in the definition of the three mixed
powerdomains, but also in our denotational semantics and corresponding adequacy
theorem. We start with the pre-requisite notion of a continuous valuation.

\begin{definition}[\cite{tix09,goubault20}]
        \label{defn:pp}
Consider a topological space $X$ and let $\mathcal{O}(X)$ be its topology.  A
function $\mu : \mathcal{O}(X) \to [0,\infty]$ is called a continuous valuation
on $X$ if for all opens $U,V \in \mathcal{O}(X)$ it satisfies the following
conditions:
\begin{itemize} 
        \item $\mu(\emptyset) = 0$;
        \item $U \subseteq V \Rightarrow \mu(U) \leq \mu(V)$;
        \item $\mu(U) + \mu(V) = \mu(U \cup V) + \mu(U \cap V)$;
        \item $\mu \left (\bigcup_{i \in I} U_i \right ) = \bigvee_{i \in I}
                \mu(U_i)$ for every directed family of opens $(U_i)_{i \in I}$.
\end{itemize}
\end{definition}

\smallskip
\noindent
We use $\PP(X)$ to denote the set of continuous valuations on $X$. We also use
$\PP_{=1}(X)$ and $\PP_{\leq 1}(X)$ to denote respectively the subsets of
continuous valuations $\mu$ such that $\mu(X) = 1$ and $\mu(X) \leq 1$.  An
important type of continuous valuation is the \emph{point-mass} (or Dirac)
valuation $\delta_x$ $(x \in X)$ defined by,
\[
        \delta_x(U) = \begin{cases}
                1 & 
                \text{ if }  x \in U \\
                0 & 
                \text{ otherwise }
        \end{cases}
\]
Recall that a cone is a set $C$ with operations for addition $+ : C \times C
\to C$ and scaling $\cdot : \Rz \times C \to C$ that satisfy the laws of vector
spaces except for the one that concerns additive inverses~\cite[Section
2.1]{tix09}.  It is well-known that $\PP(X)$ forms a cone with scaling and
addition defined pointwise. If $X$ is a domain then $\PP(X)$ is also a domain.
The order on $\PP(X)$ is defined via pointwise extension and a basis (in the
domain-theoretic sense) is given by the finite linear combinations $\sum_{i \in
I} r_i \cdot \delta_{x_i}$ of point-masses with $r_i \in \Rz$ and $x_i \in X$.
We will often abbreviate such combinations to $\sum_i r_i \cdot x_i$. Moreover
we will use $\PP_{=1, \omega}(X)$ and $\PP_{\leq 1, \omega}(X)$ to denote
respectively the subsets of continuous valuations in $\PP_{=1}(X)$ and
$\PP_{\leq 1}(X)$ that are finite linear combinations of point-masses.
Whenever $X$ is a domain the Scott-topology of the domain $\PP(X)$ coincides
with the so-called \emph{weak topology}~\cite{goubault20} which is generated by
the subbasic sets,
\[
        \statt U = \{ \mu \in \PP(X) \mid \mu(U) > p \}
        \hspace{2.5cm} (U \text{ an open of } X \text{ and } p \in \Rz)
\]
The probabilistic powerdomain also gives rise to a category of cones.  We
briefly detail it next, and direct the reader to the more thorough account
in~\cite[Chapter 2]{tix09}.

\begin{definition}
        A d-cone $(C,\leq,\,\cdot\,,+)$ is a cone $(C,\,\cdot\,,+)$ such
        that the pair $(C,\leq)$ is a DCPO and the operations $\cdot\, : \Rz
        \times C \to C$ and $+ : C \times C \to C$ are Scott-continuous.  If
        the ordered set $(C,\leq)$ is additionally a domain then we speak of a
        continuous d-cone. The category $\Cone$ has as objects continuous
        d-cones and as morphisms continuous linear maps.
\end{definition}
The forgetful functor $\Forg{} : \Cone \to \Dom$ is right adjoint: the
respective universal property is witnessed by the construct $\PP(-)$. 
Specifically for every domain $X$, continuous d-cone $C$, and $\Dom$-morphism
$f : X \to \Forg{}(C)$ we have the diagrammatic situation,
\[
        \xymatrix@C=35pt{
                X \ar[r]^(0.45){x \mapsto \delta_x} \ar[dr]_(0.45){f} & \Forg{} \PP(X)
                \ar[d]^{\Forg{} (f^\star)}
                \\
                                         & \Forg{} (C)
        }
        \hspace{2cm}
        \xymatrix{
                \PP(X) \ar@{.>}[d]^{f^\star} \\
                C
        }
\]
with $f^\star$ defined by $f^\star(\sum_i r_i \cdot x_i) = \sum_i r_i \cdot
f(x_i)$ on basic elements $\sum_i r_i \cdot x_i \in \PP(X)$.  We thus obtain
standardly a functor $\PP : \Dom \to \Cone$. Now, the functor $\Forg{} : \Cone
\to \Dom$ is additionally monadic, with the respective left adjoint given by
$\PP : \Dom \to \Cone$. Among other things such implies that $\Cone$ is as
complete as the category $\Dom$. For example $\Cone$ has all products of
continuous d-cones (recall our previous remarks about $\Dom$ and note that
d-cones are always pointed~\cite[page 21]{tix09}).  Note as well that binary
products are actually biproducts by virtue of addition being Scott-continuous.
Finally let $\LCone$ be the full subcategory of $\Cone$ whose objects are
Lawson-compact (\ie\ compact in the Lawson topology). $\PP(X)$ is
Lawson-compact  whenever $X$ is a coherent domain~\cite[Theorem 2.10]{tix09}
and Lawson-compactness of a domain entails coherence~\cite[Theorem
III-5.8]{gierz03}. We thus obtain a functor $\PP : \Coh \to \LCone$ that gives
rise to the commutative diagram,
\[
\xymatrix@C=35pt{
                \ar@<1mm>@/^3mm/[d]^{\Forg{}}\LCone  \,
                \ar@{}[d]|{\dashv}
                \ar@{>->}[r]
                & 
                \Cone
                \ar@<1mm>@/^3mm/[d]^{\Forg{}} 
                \\
                \Coh \, \ar@<1mm>@/^3mm/[u]^{\PP}
                \ar@{>->}[r]
                & 
                \Dom \ar@<1mm>@/^3mm/[u]^{\PP}
                \ar@{}[u]|{\dashv}
        }
\]
The following result, which we will use multiple times, is somewhat folklore.
Unfortunately we could not find a proof in the literature, so we provide
one here.
\begin{theorem}
        \label{theo:locin}
        Consider a zero-dimensional topological space $X$ (\ie\ a topological
        space with a basis of clopens). For all valuations $\mu,\nu \in
        \PP_{=1}(X)$ if $\mu \leq \nu$ then $\mu = \nu$.  In other words the
        order on $\PP_{=1}(X)$ is discrete.
\end{theorem}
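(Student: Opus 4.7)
The plan is to exploit the modular law for valuations together with the fact that $\mu(X) = \nu(X) = 1$, in order to first show that $\mu$ and $\nu$ agree on all clopens, and then extend to arbitrary opens via the Scott-continuity clause of Definition~\ref{defn:pp}.

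\textbf{Step 1 (agreement on clopens).} Fix a clopen $U \subseteq X$, so that $X \setminus U$ is open as well. Applying the modular law to $U$ and $X \setminus U$ one obtains
\[
        \mu(U) + \mu(X \setminus U) \;=\; \mu(X) + \mu(\emptyset) \;=\; 1,
\]
and similarly $\nu(U) + \nu(X \setminus U) = 1$. Since $\mu \leq \nu$ holds pointwise on the whole topology, we have $\mu(U) \leq \nu(U)$ and $\mu(X \setminus U) \leq \nu(X \setminus U)$; summing these two inequalities gives $1 \leq 1$, which forces both inequalities to be equalities. Hence $\mu(U) = \nu(U)$ for every clopen $U$.

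\textbf{Step 2 (extension to arbitrary opens).} Given an arbitrary open $V \subseteq X$, zero-dimensionality yields a family $(C_j)_{j \in J}$ of basic clopens with $V = \bigcup_{j \in J} C_j$. Closing this family under finite unions gives a \emph{directed} family $(D_i)_{i \in I}$ of clopens whose union is still $V$ (finite unions of clopens are clopen). The fourth clause of Definition~\ref{defn:pp} now gives $\mu(V) = \bigvee_{i \in I} \mu(D_i)$ and $\nu(V) = \bigvee_{i \in I} \nu(D_i)$, and by Step~1 these two directed joins coincide. Therefore $\mu(V) = \nu(V)$ for every open $V$, i.e.\ $\mu = \nu$.

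\textbf{Obstacles.} There is no substantive obstacle: the whole argument hinges on the observation that on a clopen the valuation axioms pin down $\mu(U)$ in terms of the complementary value $\mu(X \setminus U)$, which combined with the normalisation $\mu(X) = 1$ rigidifies the pointwise order. The only mildly delicate point is remembering to directify the covering family of basic clopens before invoking Scott-continuity, since the axiom in Definition~\ref{defn:pp} requires a directed union rather than an arbitrary one.
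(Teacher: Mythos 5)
Your proof is correct and uses exactly the same ingredients as the paper's: the modular law applied to a clopen and its complement together with the normalisation $\mu(X)=\nu(X)=1$, and the directification of the clopen cover before invoking the Scott-continuity axiom. The only cosmetic difference is that the paper argues by contradiction (locating a clopen $V$ with $\mu(V)<\nu(V)$ and deriving $1<\nu(X)$) whereas you argue directly, so this is essentially the same proof.
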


\begin{proof}
        Observe that if $\mathcal{B}$ is a basis of clopens of $X$ then the set
        $\mathcal{B}^\vee = \{ U_1 \cup \dots \cup U_n \mid U_1, \dots, U_n \in
        \mathcal{B} \}$ is directed and it is constituted only by clopens as well.
        Next we reason by contradiction and by appealing to the conditions
        imposed on valuations (Definition~\ref{defn:pp}). Suppose that there
        exists an open $U \subseteq X$ such that $\mu(U) < \nu(U)$. This open
        can be rewritten as a directed union $\bigcup\,  \mathcal{D}$ where
        $\mathcal{D} = \{ V \subseteq U \mid V \in \mathcal{B}^\vee \}$ and
        we obtain the strict inequation,
        \[
                \mu(U) < \bigvee_{V \in \mathcal{D}} \nu(V)
        \]
        The latter entails the existence of a clopen $V \in \mathcal{D}$ such
        that $\mu(V) < \nu(V)$. It then must be the case that $1 = \mu(X)
        = \mu(V) + \mu(X\backslash V) < \nu(V) + \nu(X\backslash V) = \nu(X)$.
        This proves that $1 < \nu(X)$, a contradiction.
\end{proof}

\smallskip
\noindent
\textbf{The three mixed powerdomains.}
We now present the so-called geometrically convex powercones $\Pow_x(-)$ ($x
\in \{l,u,b\})$, \emph{viz.} convex lower $(\Pow_l)$, convex upper $(\Pow_u)$,
and biconvex $(\Pow_b)$~\cite[Chapter 4]{tix09}. As we will see later on, their
composition with the probabilistic powerdomain (which was previously recalled)
yields the three mixed powerdomains that were mentioned in the introduction.

We start with some preliminary notions.  Given a DCPO $X$ and a subset $A
\subseteq X$ we denote the Scott-closure of $A$ by $\overline{A}$.  The subset
$A$ is called \emph{order convex} whenever for all $x \in X$ if there exist
elements $a_1,a_2 \in A$ such that $a_1 \leq x \leq a_2$ then $x \in A$. Given
a cone $C$ we call a subset $A \subseteq C$ geometrically convex
(or just convex) if $a_1,a_2 \in A$ entails $p \cdot a_1 + (1-p) \cdot a_2 \in
A$ for all $p \in [0,1]$.  We denote the convex closure of $A$ by $\conv{A}$.
The latter is explicitly defined by,
\[
        \conv{A} = \left \{ \textstyle{ \sum_{i \in I}} \, p_i \cdot a_i \mid
                \textstyle{ \sum_{i \in I}} \, p_i = 1 \> 
                \text{and} \> \forall i \in I. \, a_i \in A 
        \right \}
\]
For two finite subsets $F,G \subseteq C$ the expression $F+G$ denotes
Minkowski's sum and $p \cdot F$ denotes the set $\{p \cdot c \mid c \in F \}$.
If a subset $A \subseteq C$ is non-empty, Lawson-compact, and order convex we
call it a lens.  Given a continuous d-cone $C$ we define the following
partially ordered sets,
\begin{align*}
        \Pow_l(C) & = \{ A \subseteq C \mid A \text{ non-empty, closed, and convex} \}
                          \\
        \Pow_u(C) & = \{ A \subseteq C \mid A \text{ non-empty, compact, saturated,
        and convex} \}
                          \\
        \Pow_b(C) & = \{ A \subseteq C \mid A \text{ a convex lens}
        \}
\end{align*}
with the respective orders defined by $A \leq_l B \text{ iff }
\mathord{\downarrow} A \subseteq \mathord{\downarrow} B$, $A \leq_u B \text{
iff } \mathord{\uparrow} B \subseteq \mathord{\uparrow} A$, and $A \leq_b B
\text{ iff } \mathord{\downarrow} A \subseteq \mathord{\downarrow} B \text{ and
} \mathord{\uparrow}B \subseteq \mathord{\uparrow} A$. Whenever working with
$\Pow_b(C)$ we will assume that $C$ is additionally Lawson-compact.  All three
posets form domains and $\Pow_b(C)$ is additionally Lawson-compact. In
particular the sets of the form $\overline{\conv{F}}$, $\mathord{\uparrow}
\conv{F}$, and $\overline{\conv{F}} \cap \, \mathord{\uparrow}  \conv{F}$ (for
$F$ a finite set) form respectively a basis (in the domain-theoretic sense) for
the convex lower, convex upper, and biconvex powercones.  For simplicity we
will often denote $\overline{\conv{F}}$ by $l(F)$, $\mathord{\uparrow}
\conv{F}$ by $u(F)$, and $\overline{\conv{F}} \cap \mathord{\uparrow} \conv{F}$
by $b(F)$ -- and in the general case \ie\ when we wish to speak of all three
cases at the same time we write $x(F)$.

The Scott topologies of the three powercones have well-known explicit
characterisations that are closely connected to the `possibility' ($\angel$)
and `necessity' ($\demon$) operators of modal logic~\cite{winskel85,mislove04}.
Specifically the Scott topology of $\Pow_b(C)$ is generated by the subbasic
sets,
\[
        \angel  U  = \{ A \mid A \cap U \not = \emptyset \}
        \> \text{ and } \>
        {\demon} \, U   = \{ A \mid A \subseteq U\}
        \hspace{2cm}
        (U \text{ an open of } C)
\]
the Scott topology of $\Pow_l(C)$ is generated only by those subsets of the
form $\angel U$ and the Scott topology of $\Pow_u(C)$ is generated only by
those subsets of the form ${\demon}\, U$. When seeing open subsets as
observable properties~\cite{smyth83,vickers89}, the powercone $\Pow_l(C)$ is
thus associated with the \emph{possibility} of a given property $U$ being true
(\ie\ whether a property $U$ holds in at least one scenario) $\angel U$, the
powercone $\Pow_u(C)$ is associated with the \emph{necessity} of a given
property $U$ being true (\ie\ whether a property $U$ holds in all scenarios)
$\demon U$ and the powercone $\Pow_b(C)$ with both cases.  These observations
together with the previous characterisation of the Scott topology of the
probabilistic powerdomain yield a natural logic for reasoning about program
properties in concurrent pGCL, as detailed in the following section.

All three powercones are equipped with the structure of a cone and with a
continuous semi-lattice operation $\uplus$: the whole structure is defined on
basic elements by,
\[
x(F) + x(G)  = x(F + G) 
\hspace{1.5cm}
r \cdot x(F) = x (r \cdot F)
\hspace{1.5cm}
x(F) \uplus x(G) = x(F \cup G)
\]
and is thoroughly detailed in~\cite[Section 4]{tix09}. Moreover the convex
lower and convex upper variants form monads in $\Cone$ whereas the biconvex
variant forms a monad in $\LCone$. In all three cases the unit $C \to
\Pow_x(C)$ is defined by $c \mapsto x(\{c\})$ and the Kleisli lifting of  $f :
C \to \Pow_x(D)$ is defined on basic elements by $f^\star(x(F)) =
\mathlarger{\uplus} \left \{ f(a) \mid a \in F  \right \}$. Finally we often
abbreviate the composition $\Pow_x \PP$ simply to $\Pow \PP$ and by a slight
abuse of notation treat $\Pow_x \PP$ as a functor $\Pow_x \PP : \Dom \to \Dom$
if $x \in \{l,u\}$ or as a functor $\Pow_x \PP : \Coh \to \Coh$ if $x \in
\{b\}$. These three functors constitute the three mixed powerdomains.  It
follows from composition of adjunctions that every mixed powerdomain is a monad
on $\Dom$ (resp.  $\Coh$). 

\begin{theorem}
        \label{theo:str} For every $x \in \{l,u\}$ the functor is $\Pow_x \PP :
        \Dom \to \Dom$ is strong.  Given two domains $X$ and $Y$
        the tensorial strength $\mathrm{str} : X \times \Pow_x \PP (Y)
        \to \Pow_x \PP(X \times Y)$ is defined on basic elements as the mapping
        $(a,x(F)) \mapsto x(\{ \delta_a \otimes \mu \mid \mu \in F\})$ where,
        \[
                \delta_a \otimes \textstyle{\sum_i} \, p_i \cdot y_i =
                \textstyle{\sum_i} \, p_i \cdot (a,y_i)
        \]
        for every linear combination $\sum_i \, p_i \cdot y_i$.  The same
        applies to the functor $\Pow_b \PP$.
\end{theorem}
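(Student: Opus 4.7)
The plan is to realise the strength via functoriality. Define
\[
\mathrm{str}(a, M) := \Pow_x \PP(\lambda_a)(M),
\qquad
\lambda_a : Y \to X \times Y, \ y \mapsto (a,y).
\]
This is well-defined because $\lambda_a$ is a $\Dom$-morphism for every $a$. The stated formula on basic elements is then immediate: for $x(F) \in \Pow_x \PP(Y)$ and $\mu = \sum_i p_i \cdot y_i \in F$, the $\PP$-action gives $\PP(\lambda_a)(\mu) = \sum_i p_i \cdot (a, y_i) = \delta_a \otimes \mu$, while the $\Pow_x$-action on a cone morphism $g$ computes on basic elements as $\Pow_x(g)(x(F)) = x(g(F))$ (directly from the Kleisli formula $f^\star(x(F)) = \uplus \{f(a) \mid a \in F\}$ applied to $\eta \comp g$). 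Composing recovers $\mathrm{str}(a, x(F)) = x(\{\delta_a \otimes \mu \mid \mu \in F\})$.

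Next I prove Scott-continuity of $\mathrm{str}$. Continuity in $M$ for fixed $a$ is immediate from functoriality. For continuity in $a$ at fixed $M$, I curry and factor through the function space:
\[
X \xrightarrow{\lambda} [Y, X \times Y] \xrightarrow{\Pow_x \PP} [\Pow_x \PP(Y), \Pow_x \PP(X \times Y)].
\]
The first arrow is Scott-continuous because suprema in function DCPOs are pointwise and products take directed joins componentwise, so $\lambda(\bigvee_i a_i)(y) = (\bigvee_i a_i, y) = \bigvee_i \lambda(a_i)(y)$. The second is continuous by the standard local Scott-continuity of $\PP$ and $\Pow_x$ on hom-DCPOs. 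Separate Scott-continuity in each coordinate yields joint Scott-continuity of $\mathrm{str}$.

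Finally, the tensorial-strength coherence axioms (naturality, unit with the terminal object, and associativity with the product associator) reduce to elementary diagram chases based on naturality of the family $\lambda_a$ together with the point-set identities $\pi_2 \comp \lambda_a = \id_Y$ and $\alpha^{-1} \comp \lambda_a \comp \lambda_b = \lambda_{(a,b)}$. Applying $\Pow_x \PP$ to these identities and using functoriality delivers each axiom. The case $x = b$ on $\Coh$ is formally identical, with $\Pow_b \PP$ restricting to an endofunctor of $\Coh$ by the preliminary remarks. The main obstacle is the continuity-in-$a$ step, and in particular the local Scott-continuity of $\Pow_x$ on hom-DCPOs; once this is secured from the explicit description of $\Pow_x(g)$ on basic elements together with the Scott-density of these basic elements, the remaining verifications are essentially bookkeeping.
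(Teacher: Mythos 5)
Your proposal is correct and follows what is evidently the intended argument: the formula $(a,x(F)) \mapsto x(\{\delta_a \otimes \mu \mid \mu \in F\})$ in the statement is exactly $\Pow_x\PP(\lambda_a)$ computed on basic elements, and your route through currying, local continuity of $\Pow_x$ and $\PP$ on hom-DCPOs, and the fact that separate Scott-continuity implies joint Scott-continuity supplies the only non-routine ingredient (the paper defers its own proof to the extended version, so the comparison is against this evident intended argument). A small bonus of your formulation is that defining $\mathrm{str}$ globally as $\Pow_x\PP(\lambda_a)$ makes well-definedness independent of the chosen representation of basic elements, which the ``defined on basic elements'' phrasing would otherwise require you to check.
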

Similarly the following result is also obtained straightforwardly.
\begin{proposition}
        The monads $\Pow_x : \Cone \to \Cone$ ($x \in \{l,u\})$ are strong
        w.r.t. coproducts. The corresponding natural
        transformation $\mathrm{str} : \Id + \Pow_x \to \Pow_x(\Id + \Id)$ is
        defined as $\mathrm{str} = [\eta \comp \inl , \Pow_x \inr]$. An
        analogous result applies to the biconvex powercone $\Pow_b : \LCone
        \to \LCone$.
\end{proposition}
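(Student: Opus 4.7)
The plan is to verify the standard axioms of a monad strength with respect to the cocartesian monoidal structure $(+,0)$ on $\Cone$ (and identically on $\LCone$ for the biconvex case). Concretely these amount to: naturality of $\mathrm{str}$ in both arguments, the unit law $\mathrm{str}_{A,B} \comp (\id_A + \eta_B) = \eta_{A+B}$, the left-unitor law at the initial object, the coproduct-associator law, and compatibility with the Kleisli extension $(-)^\star$.

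I would first handle naturality. For $f : A \to A'$ and $g : B \to B'$ the required square
$\Pow_x(f+g) \comp \mathrm{str}_{A,B} = \mathrm{str}_{A',B'} \comp (f + \Pow_x g)$
splits via the universal property of the coproduct into two equations: on the left summand it reduces to naturality of $\eta$, and on the right summand to functoriality of $\Pow_x$ applied to $(f+g) \comp \inr = \inr \comp g$.

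I would then dispatch the unit law, the unitor and the associator. Coproduct calculus together with naturality of $\eta$ collapses the left-hand side of the unit law to $\eta \comp [\inl,\inr] = \eta$; the left-unitor axiom is immediate because any two morphisms out of the initial cone coincide; and the associator axiom reduces to routine coproduct manipulation combined with functoriality of $\Pow_x$. Finally, compatibility with the Kleisli extension requires showing that for every $f : B \to \Pow_x C$ the equation
$\mathrm{str}_{A,C} \comp (\id + f^\star) = \bigl(\mathrm{str}_{A,C} \comp (\id + f)\bigr)^\star \comp \mathrm{str}_{A,\Pow_x B}$
holds. Splitting on the coproduct, the left-summand case uses the monad law $\eta^\star = \id$, and the right-summand case uses only functoriality of $\Pow_x$ together with associativity of Kleisli composition.

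The main obstacle is purely notational rather than mathematical: since coproducts in $\Cone$ are biproducts by Scott-continuity of addition, one should keep the cocartesian structure carefully separate from the cone structure, so as not to conflate the injections $\inl,\inr$ with additive components of the biproduct. No further topological or order-theoretic content is required beyond what is already used in Theorem~\ref{theo:str}, and for this reason the biconvex case in $\LCone$ goes through by exactly the same argument.
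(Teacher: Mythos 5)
Your proposal is correct and coincides with the paper's (omitted, explicitly flagged as straightforward) argument: $[\eta \comp \inl, \Pow_x \inr]$ is the canonical strength that any monad acquires over the cocartesian monoidal structure, and the coproduct-by-coproduct verification of naturality, the unit/unitor/associator laws, and Kleisli compatibility that you outline goes through verbatim in $\Cone$ and $\LCone$. One typographical slip worth fixing: in your Kleisli-compatibility equation the final factor should be $\mathrm{str}_{A,B}$ rather than $\mathrm{str}_{A,\Pow_x B}$ (otherwise the two sides have mismatched domains $A + \Pow_x B$ versus $A + \Pow_x \Pow_x B$), and the left-summand case there rests on the monad law $g^\star \comp \eta = g$ rather than on $\eta^\star = \id$.
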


By unravelling the definition of $\mathrm{str}$ and by taking advantage of the
fact that binary coproducts of cones are also products (recall our previous
observations about biproducts in $\Cone$) we obtain the following equivalent
formulation of $\mathrm{str}$ for all components $C,D$,
\[
                \mathrm{str}_{C,D}: C \times \Pow_x(D) \to \Pow_x(C \times D)
                \qquad \qquad
                (c,x(F)) \mapsto x(\{c\} \times F)
\]

\section{Concurrent pGCL, its operational semantics, and its  logic}
\label{sec:lang}

As mentioned in Section~\ref{sec:intro} our language is a concurrent extension
of pGCL~\cite{mciver01,mciver05}. It is described by the BNF grammar,
\begin{align*}
        \label{grammar_lang}
        P ::= \prog{skip} 
        \mid 
        \prog{a} 
        \mid 
        P \, ; \, P
        \mid
        P \parallel P
        \mid 
        P \, +_\prog{p} \, P
        \mid 
        P \, + \, P
        \mid
        \prog{if} \> \prog{b} \> \prog{then} \> \> P \> \> \prog{else} \> \> P
        \mid
        \prog{while} \> \prog{b} \> \> P
\end{align*}
where $\prog{a}$ is a program from a pre-determined set of programs and
$\prog{b}$ is a condition from a pre-determined set of conditions.  A program
$P +_\prog{p} Q$ $(\prog{p} \in [0,1] \cap \mathbb{Q})$ represents a
probabilistic choice between either the evaluation of $P$ or $Q$. $P + Q$ is
the non-deterministic analogue.  The other program constructs are quite
standard so we omit their explanation.

We now equip the language with a small-step operational semantics. The latter
will be used later on for introducing a big-step counterpart via probabilistic
scheduling.  First we take an arbitrary set $S$ of states, for each atomic
program $\prog{a}$ we postulate the existence of a function $\sem{\prog{a}} : S
\to \Dist(S)$, and for each condition $\prog{b}$ we postulate the existence of
a function $\sem{\prog{b}} : S \to \{ \mathtt{tt,ff} \}$. Let $\SPrg$ be the
set of programs. The language's small-step operational semantics is the
relation $\longrightarrow \, \subseteq (\SPrg \times S) \times \Dist (S +
(\SPrg \times S) )$ that is generated inductively by the rules in
Figure~\ref{fig:op_sem}. The latter are a straightforward probabilistic
extension of the rules presented in the classical reference \cite[Chapters 6
and 8]{reynolds98}, so we omit their explanation.

Let us denote by $\pv{P}{s} \longrightarrow$ the set $\{ \mu \mid \pv{P}{s}
\longrightarrow \mu \}$.  The small-step operational semantics enjoys the
following key property for adequacy.

\begin{proposition}
        \label{theo:finitely}
        For every program $P$ and state $s$ the set
        $\pv{P}{s} \longrightarrow$ is finite.
\end{proposition}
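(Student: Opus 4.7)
The natural approach is a straightforward structural induction on the program $P$, relying on the inductive shape of the rules in Figure~\ref{fig:op_sem}. My plan is to first inspect each rule and check that the set of available transitions for a compound program is built, in a syntax-directed way, from finitely many transitions of its immediate subprograms.

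In more detail, for the base cases $\prog{skip}$ and atomic $\prog{a}$ there is a single applicable rule, hence $\pv{\prog{skip}}{s}\!\longrightarrow$ and $\pv{\prog{a}}{s}\!\longrightarrow$ are singletons (the latter being the lift of $\sem{\prog{a}}(s)\in\Dist(S)$ into $\Dist(S+\SPrg\times S)$). For the conditional $\prog{if}\ \prog{b}\ \prog{then}\ P\ \prog{else}\ Q$ and for $\prog{while}\ \prog{b}\ P$, the value $\sem{\prog{b}}(s)\in\{\mathtt{tt},\mathtt{ff}\}$ selects a single unfolding step, so again only one transition is enabled. For probabilistic choice $P+_\prog{p}Q$ the characteristic small-step rule fires a single transition that stochastically picks the left or right branch, and for non-deterministic choice $P+Q$ the two applicable rules yield exactly two transitions. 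In each of these cases finiteness is immediate without invoking the induction hypothesis.

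The genuinely inductive cases are sequencing and parallel composition. For $P;Q$, each transition of $\pv{P;Q}{s}$ is obtained from a transition $\pv{P}{s}\longrightarrow\mu$ by reshaping $\mu\in\Dist(S+\SPrg\times S)$ according to whether $P$ has already terminated or must continue as $P';Q$; the induction hypothesis gives that $\pv{P}{s}\!\longrightarrow$ is finite, and the mapping $\mu\mapsto\mu'$ induced by the sequencing rule is a function, so $\pv{P;Q}{s}\!\longrightarrow$ is finite as well. For $P\parallel Q$, the interleaving rules allow either a left step (one transition of $P\parallel Q$ per transition of $P$) or a right step (one per transition of $Q$); the induction hypothesis applied to both sides bounds the total by $|\pv{P}{s}\!\longrightarrow|+|\pv{Q}{s}\!\longrightarrow|$, which is finite.

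I expect the minor obstacle to be bookkeeping: one must verify in each rule that the reshaping of the subprogram's output distribution is functionally determined by the premise (and by the finite data in the rule, such as $\prog{p}$ or the side of the interleaving), so that finitely many premises translate to finitely many conclusions rather than a continuum of them. Once this is read off the rules in Figure~\ref{fig:op_sem}, the induction closes in one line per case.
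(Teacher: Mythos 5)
Your overall strategy—structural induction on $P$, using that finite sets are closed under binary unions and under images of functions—is exactly the paper's proof (which it states in one sentence). However, you have misread the two choice rules in Figure~\ref{fig:op_sem}, and as a result your claim that finiteness of those cases ``is immediate without invoking the induction hypothesis'' is false. The rule for $P +_\prog{p} Q$ has \emph{two} premises, one transition $\pv{P}{s}\longrightarrow\mu$ and one transition $\pv{Q}{s}\longrightarrow\nu$, and concludes $\pv{P+_\prog{p}Q}{s}\longrightarrow \prog{p}\cdot\mu+(1-\prog{p})\cdot\nu$; hence $\pv{P+_\prog{p}Q}{s}\!\longrightarrow$ is the image of the product $(\pv{P}{s}\!\longrightarrow)\times(\pv{Q}{s}\!\longrightarrow)$ under a function, not a singleton, and its finiteness genuinely requires the induction hypothesis on both $P$ and $Q$. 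Likewise the two rules for $P+Q$ give $\pv{P+Q}{s}\!\longrightarrow\;=\;(\pv{P}{s}\!\longrightarrow)\cup(\pv{Q}{s}\!\longrightarrow)$, a binary union whose cardinality is bounded by $|\pv{P}{s}\!\longrightarrow|+|\pv{Q}{s}\!\longrightarrow|$, not ``exactly two transitions''; if $P$ itself is, say, a nested nondeterministic choice, $P+Q$ has more than two outgoing transitions.

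The gap is easily repaired and does not threaten the result: both choice cases close using precisely the closure properties you already invoke for sequencing and parallel composition (images of finite sets, and binary unions of finite sets, respectively). The remaining cases—$\prog{skip}$, atomic programs, conditionals, while-loops, sequencing, and parallel composition—are handled correctly in your write-up.
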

\begin{proof}
        Follows by induction over the syntactic structure of programs
        and by taking into account that finite sets are closed under
        binary unions and functional images.
\end{proof}

\begin{figure*}
        \centering
\scalebox{0.98}{
\renewcommand{\arraystretch}{3}
\begin{tabular}{p{0cm}  p{0cm}  p{0cm}}
        \multicolumn{1}{c}{
        \infer{ \pv{\prog{a}}{s} \longrightarrow \sem{\prog{a}}(s) }{}
        \hspace{1cm} 
        }
        & 
        \multicolumn{1}{c}{
        \infer{ \pv{\prog{skip}}{s} \longrightarrow 1 \cdot s }{}
        \hspace{1cm} 
        }
        &
        \multicolumn{1}{c}{
        \infer{ \pv{P; Q}{s} 
        \longrightarrow \sum_i p_i \cdot \pv{P_i; Q}{s_i} + \sum_j p_j \cdot \pv{Q}{s_j}}
                { \pv{P}{s} 
                \longrightarrow \sum_i p_i \cdot \pv{P_i}{s_i} + \sum_j p_j \cdot s_j}
        }
        \\[10pt]
        \multicolumn{3}{c}{
        \infer{ \pv{P \parallel Q}{s} 
        \longrightarrow \sum_i p_i \cdot \pv{P_i \parallel Q}{s_i} 
                + \sum_j p_j \cdot \pv{Q}{s_j}}
                { \pv{P}{s} 
                \longrightarrow \sum_i p_i \cdot \pv{P_i}{s_i} + \sum_j p_j \cdot s_j}
        }
        \\[10pt]
        \multicolumn{3}{c}{ 
        \infer{ \pv{P \parallel Q}{s} 
        \longrightarrow \sum_i p_i \cdot \pv{P \parallel
        Q_i}{s_i} + \sum_j p_j \cdot \pv{P}{s_j}}
                { \pv{Q}{s} 
                \longrightarrow \sum_i p_i \cdot \pv{Q_i}{s_i} + \sum_j p_j \cdot s_j }
        }
        \\[10pt]
        \multicolumn{2}{l}
        {
                \infer{ \pv{P \> +_\prog{p} \> Q}{s} \longrightarrow 
                \prog{p} \cdot \mu + (1-\prog{p}) \cdot \nu }{
                  \pv{P}{s} \longrightarrow \mu \qquad
                \pv{Q}{s} \longrightarrow \nu}
        }
        &
        \multicolumn{1}{l}
        {
        \infer{ 
                \pv{P + Q}{s} \longrightarrow \mu 
        }
        {
                \pv{P}{s} \longrightarrow \mu
        }
        \hspace{1.8cm}
        \infer{ 
                \pv{P + Q}{s} \longrightarrow \mu 
        }
        {
                \pv{Q}{s} \longrightarrow \mu         
        }
        }
        \\[10pt]
        \multicolumn{2}{c}
        {
                \infer{
                \pv{\prog{if \, \> b} \> 
                \, \prog{then} \, \> P \> \, \prog{else} \, \> Q}{s}
                \longrightarrow 1 \cdot \pv{P}{s} 
                }
                {
                \sem{\prog{b}}(s) = \mathtt{tt} 
                }
        }
        &
        \multicolumn{1}{c}{
                \infer{ 
                \pv{\prog{if \, \> b} \> \, \prog{then} \, 
                \> P \> \, \prog{else} \,  \> Q}{s} 
                \longrightarrow 1 \cdot \pv{Q}{s}  
                }
                {
                \sem{\prog{b}}(s) = \mathtt{ff}
                }
        }
        \\[10pt]
        \multicolumn{2}{c}
        {
                \infer{ 
                \pv{\prog{while \, \> b} \> \, P}{s} \longrightarrow 
                1 \cdot \pv{P; \prog{while \, \> b} \> \, P}{s} 
                }
                {
                \prog{\sem{b}}(s) = \mathtt{tt} 
                }
        }
        &
        \multicolumn{1}{c}
        {
                \infer{ 
                \pv{\prog{while \, \> b} \, \> P}{s} \longrightarrow 1 \cdot s 
                }
                {
                \sem{\prog{b}}(s) = \mathtt{ff}
                }
        }
\end{tabular}
}

\caption{Small-step operational semantics}
\label{fig:op_sem}
\end{figure*}

We now introduce a big-step operational semantics. As mentioned before it is
based on the previous small-step semantics and the notion of a probabilistic
scheduler~\cite{varacca07}.  Intuitively a scheduler resolves all
non-deterministic choices that are encountered along the evaluation of a
program $P$ based on a history $h$ of previous decisions and the current state
$s$.
\begin{definition}
        A probabilistic scheduler $\sch$ is a partial function,
        \[
                \sch : \big ((\mathrm{Pr} \times S) \times
                \Dist(S + (\mathrm{Pr} \times S))\big )^\ast \times (\mathrm{Pr} \times S)
                \xrightharpoonup{\hspace{0.3cm} }
                \Dist \Dist(S + (\mathrm{Pr} \times S))
        \]
        such that whenever $\sch(h,\pv{P}{s})$ is well-defined it is a distribution
        of valuations in $\pv{P}{s} \longrightarrow$ (\ie\ it is a
        distribution of the possible valuations of one-step transitions
        that originate from $\pv{P}{s}$). 
\end{definition}
Note that if $\sch(h,\pv{P}{s})$ is well-defined it must be a linear
combination $\sum_k p_k \, \cdot \nu_k$ of valuations $\nu_k$ of the form
$\sum_{i} p_{k,i} \cdot \pv{P_{k,i}}{s_{k,i}} + \sum_{j} p_{k,j} \cdot
s_{k,j}$. Every representation $\nu_k$ is essentially unique by virtue
of $S + (\mathrm{Pr} \times S)$ being discrete. Not only this, by
Theorem~\ref{theo:locin}  the space  $\Dist(S + (\mathrm{Pr}
\times S))$ will also be discrete and thus the representation $\sum_k p_k \cdot
\nu_k$ itself will be essentially unique as well. This is important for the
definition of the big-step operational semantics and Proposition~\ref{theo:det}
below.

We will often denote a pair of the form $(h,\pv{P}{s})$ simply by $h\pv{P}{s}$.
The big-step operational semantics is defined as a relation
$h\pv{P}{s}\Downarrow^{\mathcal{S},n} \mu$ where $n \in \Nats$ is a natural
number and $\mu \in \PDist(S)$ is a valuation.  This relation represents an
\emph{$n$-step partial} evaluation of $\pv{P}{s}$ w.r.t.  $\mathcal{S}$ and
$h$, and it is defined inductively by the rules in Figure~\ref{fig:bop_sem}.

\begin{proposition}
        \label{theo:det}
        Take a natural number $n \in \Nats$, configuration $\pv{P}{s}$, history
        $h$, and scheduler $\sch$.  The big-step operational semantics has
        the following properties:
        \begin{itemize}
                \item (\emph{Determinism})
                        if $h\pv{P}{s} \Downarrow^{\mathcal{S},n} \mu
                        \text{ and }
                        h\pv{P}{s} \Downarrow^{\mathcal{S},n} \nu$
                        then
                        $\mu = \nu$;
                \item (\emph{Monotonicity})
                        if $h\pv{P}{s} \Downarrow^{\mathcal{S},n} \mu
                        \text{ and } 
                        h\pv{P}{s} \Downarrow^{\mathcal{S},n+1} \nu
                        $
                        then
                        $\mu \leq \nu$.
        \end{itemize}
\end{proposition}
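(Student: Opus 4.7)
The plan is to prove both clauses by induction on $n$, leveraging the essentially-unique representation of $\sch(h,\pv{P}{s})$ flagged just before the statement (which follows from Theorem~\ref{theo:locin} applied to the discrete sets $S + (\mathrm{Pr} \times S)$ and $\Dist(S + (\mathrm{Pr} \times S))$). This uniqueness ensures that whenever $\sch(h,\pv{P}{s})$ is defined, any two derivations must unfold along the same splitting $\sum_k p_k \cdot \nu_k$ with each $\nu_k = \sum_i p_{k,i} \cdot \pv{P_{k,i}}{s_{k,i}} + \sum_j p_{k,j} \cdot s_{k,j}$, so the only latitude lies in the recursive sub-evaluations, to which one applies the inductive hypothesis.

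For \emph{determinism}, the base case $n = 0$ is immediate since the corresponding rule fixes a unique valuation (the zero valuation in $\PDist(S)$). For the inductive step, given derivations $h\pv{P}{s} \Downarrow^{\mathcal{S},n+1} \mu$ and $h\pv{P}{s} \Downarrow^{\mathcal{S},n+1} \nu$, the preceding observation forces identical terminating contributions $p_k p_{k,j} \cdot \delta_{s_{k,j}}$ on both sides and matched recursive judgements $h_{k,i}\pv{P_{k,i}}{s_{k,i}} \Downarrow^{\mathcal{S},n} \mu_{k,i}$ and $h_{k,i}\pv{P_{k,i}}{s_{k,i}} \Downarrow^{\mathcal{S},n} \nu_{k,i}$ for each $k,i$, with a common extended history $h_{k,i}$ (obtained by appending the same transition record). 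The inductive hypothesis then gives $\mu_{k,i} = \nu_{k,i}$, and summing under the common coefficients yields $\mu = \nu$.

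For \emph{monotonicity} I again induct on $n$. The base case $n = 0$ is handled by observing that $\mu$ must be the zero valuation (bottom in $\PDist(S)$), so $\mu \leq \nu$ holds trivially. For the inductive step, essential uniqueness forces the $(n+1)$-step and $(n+2)$-step derivations of $h\pv{P}{s}$ to share the same scheduler splitting and the same terminating contributions; only the recursive sub-evaluations differ, with $\mu$ using $n$-step results $\mu_{k,i}$ and $\nu$ using $(n+1)$-step results $\nu_{k,i}$ on the same sub-configurations under the same histories. The inductive hypothesis provides $\mu_{k,i} \leq \nu_{k,i}$, and since scalar multiplication and addition on the d-cone $\PDist(S)$ are monotone (being Scott-continuous), these inequalities lift through the common coefficient structure to $\mu \leq \nu$.

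The main potential obstacle is purely bookkeeping: tracking which history $h$ is threaded through each recursive call and managing the essential (rather than literal) uniqueness of the representations $\sum_k p_k \cdot \nu_k$ and $\sum_i p_{k,i} \cdot \pv{P_{k,i}}{s_{k,i}} + \sum_j p_{k,j} \cdot s_{k,j}$. Both concerns are handled by the observations immediately preceding the statement: the history is determined by the left-hand side of the judgement together with the chosen branch, and discreteness via Theorem~\ref{theo:locin} ensures these representations are unique up to reordering of summands, which does not affect the resulting sums.
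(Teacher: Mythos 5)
Your proof is correct and follows essentially the same route as the paper's: induction on $n$, using the essential uniqueness of the scheduler's output (as flagged in the remarks preceding the proposition) to match up derivations, and the monotonicity of scaling and addition of valuations to lift the inductive hypothesis through the common coefficient structure. The paper's own proof is a one-line appeal to exactly these two ingredients, so your more detailed write-out is a faithful elaboration of it.
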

\begin{proof}
        Follows straightforwardly by induction over the natural numbers and by
        appealing to the fact that scaling and addition (of valuations)
        are monotone.
\end{proof}

\begin{figure*}
       \begin{gather*}
       \scalebox{0.98}{
       \infer{h{\pv{P}{s}} \Downarrow^{\mathcal{S},0} \bot}{}
       }
       \hspace{2cm}
       \scalebox{1}{
       \infer{
               h\pv{P}{s} \Downarrow^{\mathcal{S},n+1} 
               \sum_k p_k \cdot \left ( \sum_{i} p_{k,i} 
                       \cdot \mu_{k,i} + \sum_{j} p_{k,j} \cdot s_{k,j}
               \right )
       }
       {
               \sch(h\pv{P}{s})
               =
               \sum_k p_k  \cdot  \nu_k
               \qquad
               \forall k,i. \,
               h\pv{P}{s}\nu_k\pv{P_{k,i}}{s_{k,i}} \Downarrow^{\mathcal{S},n} \mu_{k,i}
       }
       }
       \end{gather*}
\caption{Big-step operational semantics}
\label{fig:bop_sem}
\end{figure*}

\smallskip
\noindent
\textbf{The logic.}
We now introduce the aforementioned logic for reasoning about program
properties. It is an instance of geometric propositional logic~\cite{vickers89}
that allows to express both must ($\demon$) and may ($\angel$) statistical
termination ($\statt$).  It is two-layered specifically its formulae $\phi$ are
given by,
\[
        \phi ::=  \demon \varphi \mid \angel \varphi \mid \phi \wedge \phi \mid
        \bigvee \phi \mid \bot \mid \top
        \hspace{2cm}
        \varphi ::= \> \statt U \mid
                    \varphi \wedge \varphi \mid
                    \bigvee \varphi \mid
                    \bot \mid \top
\] 
where $U$ is a subset of the discrete state space $S$ (\ie\ $U$ is an
`observable property' per our previous remarks) and $p \in \Rz$. Whilst the top
layer handles the non-deterministic dimension (\eg\ all possible interleavings)
the bottom layer handles the probabilistic counterpart. Note that the
disjunction clauses are not limited to a finite arity but are set-indexed -- a
core feature of geometric logic.  The expression $\statt U$ refers to a program
`terminating in $U$' with probability \emph{strictly} greater than $p$.  The
formulae $\demon \varphi$ and $\angel \varphi$ correspond to universal and
existential quantification respectively.  For example $\demon \statt U$ reads
as ``it is \emph{necessarily} the case that the program at hand terminates in
$U$ with probability strictly greater than $p$'' whilst $\angel \statt U$ reads
as ``it is \emph{possible} that the program at hand terminates in $U$ with
probability strictly greater than $p$''.  It may appear that the choice of such
a logic for our language is somewhat \emph{ad-hoc}, but one can easily see that
it emerges as a natural candidate after inspecting the Scott topologies of the
mixed powerdomains and recalling the motto `observable properties as open sets'
(recall the previous section). In this context conjunctions and disjunctions
are interpreted respectively as intersections and unions of open sets.  

Observe as well that this logic is conceptually different from usual logics in
probabilistic process algebra (see for example~\cite{deng15}). Indeed the
latter are more focussed on reasoning about probabilities of (labelled)
transitions, whilst in our case the idea is to reason about probabilities of
halting states. Remarkably, a logic similar in spirit to those in process
algebra could also be topologically generated, not by taking the underlying
topology of the mixed powerdomains (as we do), but the topology of the
`resumptions model' mentioned in the introduction. See more details about this
particular aspect in~\cite{mislove04}.

The next step is to present a satisfaction relation between pairs $\pv{P}{s}$
and formulae $\phi$. To this effect we recur to the notion of a non-blocking
scheduler which is presented next.
\begin{definition}
        Consider a pair $(h,\pv{P}{s})$ and a scheduler $\sch$. We qualify
        $\sch$ as \emph{non-blocking w.r.t. $h\pv{P}{s}$} if for every natural
        number $n \in \Nats$ we have some valuation $\mu$ such that $h\pv{P}{s}
        \Downarrow^{\sch,n} \mu$.
\end{definition}
We simply say that $\sch$ is non-blocking  -- \ie\ we omit the reference to
$h\pv{P}{s}$ -- if the pair we are referring to is clear from the context.

Observe that according to our previous remarks every formula $\varphi$
corresponds to an open set of $\PP (S)$ -- in fact we will treat $\varphi$ as
so -- and define inductively the relation $\models$ between $\pv{P}{s}$ and
formulae $\phi$ as follows:
\begin{align*}
        \pv{P}{s} & \models \demon \varphi
        \text{ iff }
        \text{\emph{for all} non-blocking schedulers } \sch  \,
        \text{w.r.t.}\, \pv{P}{s} .  \,
        \exists n \in \Nats, 
        \mu \in \varphi.  \, \pv{P}{s} \Downarrow^{\sch,n} \mu
        \\
        \pv{P}{s} & \models \angel \varphi
        \text{ iff }
        \text{\emph{for some} scheduler } \sch. \>
        \exists n \in \Nats, \,
        \mu \in \varphi. \,
        \pv{P}{s} \Downarrow^{\sch,n} \mu 
\end{align*}
with all other cases defined standardly. The non-blocking condition is
needed to ensure that degenerate schedulers do not trivially falsify
formulae of the type $\demon \varphi$. For example the totally
undefined scheduler always falsifies such formulae.

\section{Denotational semantics}
\label{sec:den}
\textbf{The intensional step.}
As alluded to in Section~\ref{sec:intro} the first step in defining our
denotational semantics is to solve a domain equation of resumptions, with the
branching structure given by a mixed powerdomain (Section~\ref{sec:back}).
More specifically we solve,
\begin{align}
        X \cong \Pow_{x} \PP(S + X \times S)^S
        \hspace{3cm}
        (S \text{ a discrete domain})
        \label{domeq}
\end{align}
for a chosen $x \in \{l,u,b\}$ and where $(-)^S$ is the $S$-indexed product
construct. In order to not overburden notation we abbreviate $\Pow_x$ simply to
$\Pow$ whenever the choice of $x$ is unconstrained. We also denote the functor
$\Pow_x \PP(S + (-) \times S)^S$ by $R_x$ and abbreviate the latter simply to
$R$ whenever the choice of $x$ is unconstrained. Now, in the case that $x \in
\{l,u\}$ the solution of Equation~\eqref{domeq} is obtained standardly, more
specifically by employing the standard final coalgebra construction,
\cite[Theorem IV-5.5]{gierz03}, and the following straightforward proposition.
\begin{proposition}
        \label{theo:cont}
        The functor $R_x$ is locally continuous for every $x \in \{l,u,b\}$.
\end{proposition}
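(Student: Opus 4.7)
The plan is to decompose $R_x$ as a composite of simpler functors, verify local continuity of each factor, and conclude by closure of local continuity under composition. Specifically, I would write
\[
R_x \;=\; (-)^S \,\circ\, \Forg{} \,\circ\, \Pow_x \,\circ\, \PP \,\circ\, F_S,
\]
where $F_S(X) = S + X \times S$ and $\Forg{}$ is the appropriate forgetful functor from cones back to domains. Closure of local continuity under composition is routine: if $F, G$ are locally continuous and $(h_i)_i$ is directed in a hom-DCPO, then $(GF)(\bigvee_i h_i) = G\bigl(\bigvee_i F h_i\bigr) = \bigvee_i (GF)(h_i)$ by local continuity first of $F$, then of $G$.

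For the individual factors, $F_S$ is locally continuous because products and coproducts in $\Dcpo$ act on morphisms componentwise and directed suprema in $\Dcpo$-products are pointwise; the same holds in the full subcategories $\Dom$ and $\Coh$. The forgetful functor $\Forg{}$ is locally continuous since it is essentially the identity on hom-DCPOs. The exponential $(-)^S$ is locally continuous because $S$ is discrete, hence $Y^S$ is the $S$-indexed product and suprema of morphisms into it are again pointwise. For $\PP : \Dom \to \Cone$, the action on a morphism is $\PP(f) = (\eta^{\PP}_Y \circ f)^\star$; Scott-continuity of this assignment reduces to Scott-continuity of the Kleisli lifting in its argument morphism, which follows from Scott-continuity of scaling and addition on the d-cone $\PP Y$. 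The argument for $\Pow_x : \Cone \to \Cone$ (respectively $\LCone \to \LCone$ when $x = b$) is analogous, appealing additionally to Scott-continuity of $\uplus$ and the cone operations on $\Pow_x(-)$, all of which are part of its d-cone structure as developed in \cite{tix09}.

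The main obstacle will be verifying Scott-continuity of the powerdomain constructions on morphisms. For $\PP$ one can check it concretely on basic elements via $\PP(f)(\sum_i r_i \cdot x_i) = \sum_i r_i \cdot f(x_i)$, whilst for the convex powercones one uses $\Pow_x(f)(x(F)) = x(\{f(a) \mid a \in F\})$ together with Scott-continuity of $\uplus$ and the cone operations; the extension from basic elements to all of $\Pow_x \PP Y$ then follows by density of the bases and the fact that continuous linear extensions commute with directed suprema of their defining maps.
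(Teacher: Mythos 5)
Your proposal is correct and is precisely the ``straightforward'' argument the paper has in mind (the proof itself is deferred to the extended version): decompose $R_x$ into the (co)product, exponent, forgetful, and powerdomain factors, check local continuity of each — reducing the powerdomain cases to Scott-continuity of the cone and semilattice operations together with uniqueness of continuous linear extensions on basic elements — and conclude by closure under composition. No gaps; the only point worth making explicit is that for $x=b$ the factorisation passes through $\Coh$ and $\LCone$ rather than $\Dom$ and $\Cone$, which you already note.
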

The case $x \in \{b\}$ (which moves us from $\Dom$ to the category $\Coh$) is
just slightly more complex: one additionally appeals to \cite[Exercise
IV-4.15]{gierz03}.  Notably in all cases the solution thus obtained is always
the greatest one \ie\ the carrier $\nu R$ of the final $R$-coalgebra. It is
also the smallest one in a certain technical sense: it is the initial
$R$-strict algebra~\cite[Chapter IV]{gierz03}, a fact that follows from $R$
being strict (\ie\ it preserves strict morphisms) and \cite[Theorem
IV-4.5]{gierz03}. Both finality and initiality are crucial to our work: we will
use the final coalgebra property to define the intensional semantics $\asem{-}$
and then extensionally collapse the latter by recurring to the initial algebra
property.

Let us thus proceed by detailing the intensional semantics.  It is based on the
notion of primitive corecursion~\cite{uustalu99} which we briefly recall next.  

\begin{theorem}
Let $\catC$ be a category with binary coproducts and $\funF :
\catC \to \catC$ be a functor with a final coalgebra $\unfold : \nu \funF \cong
\funF(\nu \funF)$. For every $\catC$-morphism $f : X \to \funF(\nu \funF + X)$
there exists a unique $\catC$-morphism $h : X \to \nu \funF$ that makes the
following diagram commute.
\[
        \xymatrix@C=60pt{
                X \ar[d]_(0.5){f} 
                \ar@{.>}[r]^(0.5){h} 
                & 
                \nu \funF 
                \ar@{}[d]|-(0.5){\cong}
                \ar@/^10pt/[d]^(0.5){\unfold}
                \\
                \funF(\nu F + X) 
                \ar[r]_{\funF[\id,h]}
                &  
                \funF(\nu \funF)
                \ar@/^10pt/[u]^(0.5){\fold}
        }
\]
The morphism $h$ is defined as the composition $\mathrm{corec}([\funF \inl \,
\comp \unfold,f])\comp \inr$, where $\mathrm{corec}([\funF \inl \, \comp
\unfold,f])$ is the universal morphism induced by the $\funF$-coalgebra $[
\funF \inl \, \comp \unfold, f] : \nu F + X \to \funF(\nu \funF + X)$. 
\end{theorem}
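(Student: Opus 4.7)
The plan is the standard reduction of primitive corecursion to plain corecursion, exploiting the fact that $\nu \funF + X$ can be equipped with an $\funF$-coalgebra structure that ``behaves like identity on $\nu \funF$ and like $f$ on $X$''.

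First, form the $\funF$-coalgebra $g = [\funF\inl \comp \unfold,\, f] : \nu \funF + X \to \funF(\nu \funF + X)$, and let $k = \mathrm{corec}(g) : \nu \funF + X \to \nu \funF$ be the unique coalgebra morphism into the final coalgebra. The crucial preliminary step is to show that $k \comp \inl = \id_{\nu \funF}$. To see this, observe that $\inl : \nu \funF \to \nu \funF + X$ is itself a coalgebra morphism from $(\nu \funF, \unfold)$ to $(\nu \funF + X, g)$, because $g \comp \inl = \funF\inl \comp \unfold$ by the coproduct computation. Composing with $k$ yields a coalgebra morphism $(\nu \funF, \unfold) \to (\nu \funF, \unfold)$, which must equal $\id_{\nu \funF}$ by finality. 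Now define $h = k \comp \inr$, which is the morphism displayed in the statement.

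For commutativity, compute
\[
        \unfold \comp h \;=\; \unfold \comp k \comp \inr \;=\; \funF k \comp g \comp \inr \;=\; \funF k \comp f,
\]
using that $k$ is a coalgebra morphism and the coproduct equation $g \comp \inr = f$. On the other hand, $\funF k = \funF[k \comp \inl,\, k \comp \inr] = \funF[\id,\, h]$, so $\unfold \comp h = \funF[\id,h] \comp f$, as required.

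For uniqueness, suppose $h' : X \to \nu \funF$ also satisfies $\unfold \comp h' = \funF[\id, h'] \comp f$. Set $k' = [\id_{\nu \funF}, h'] : \nu \funF + X \to \nu \funF$; I claim $k'$ is a coalgebra morphism from $(\nu \funF + X, g)$ to $(\nu \funF, \unfold)$. Precomposing with $\inl$ gives $\unfold \comp k' \comp \inl = \unfold = \funF \id \comp \unfold = \funF k' \comp \funF\inl \comp \unfold = \funF k' \comp g \comp \inl$, using $\funF\inl$ followed by $\funF[\id,h']$ equals $\funF \id$. Precomposing with $\inr$ gives $\unfold \comp k' \comp \inr = \unfold \comp h' = \funF[\id,h'] \comp f = \funF k' \comp g \comp \inr$ by hypothesis. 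By the universal property of the coproduct, $\unfold \comp k' = \funF k' \comp g$, so by finality $k' = k$, whence $h' = k' \comp \inr = k \comp \inr = h$. The only mildly delicate step is verifying the two coproduct equations for $k'$ cleanly, but these are routine once one unfolds $[\id,h']$ through $\funF$; I do not anticipate any genuine obstacle, since the result is a classical consequence of finality.
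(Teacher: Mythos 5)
Your proof is correct and follows exactly the construction the paper itself gives (reducing primitive corecursion to plain corecursion via the coalgebra $[\funF\inl\comp\unfold,f]$ on $\nu\funF + X$, showing $k\comp\inl=\id$ by finality, and establishing uniqueness via the copair $[\id,h']$); the paper merely recalls this as a standard result from the literature and defers the verification. No gaps.
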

With the notion of primitive corecursion at hand one easily defines operators
to interpret both sequential and parallel composition, as detailed next.

\begin{definition}[Sequential composition]
        Define the map $\seqI : \nu R \times \nu R \to \nu R$ as the morphism
        that is induced by primitive corecursion and the composition of
        the following continuous maps,
        \begin{align*}
                \nu R \times \nu R 
                & \xrightarrow{\unfold \times \id} 
                \Pow \PP (S + \nu R \times S)^S \times \nu R \\
                & \xrightarrow{\langle \pi_s \times \id \rangle_{s \in S} } 
                \left ( \Pow \PP (S + \nu R \times S) \times \nu R \right )^S 
                \\
                & \xrightarrow{\mathrm{str}^S}
                \Pow \PP ((S + \nu R \times S) \times \nu R)^S 
                \\
                & \xrightarrow{\cong}
                \Pow \PP ((\nu R + \nu R \times \nu R) \times S)^S 
                \\
                & \xrightarrow{\Pow \PP(\inr)^S}
                \Pow \PP (S + (\nu R + \nu R \times \nu R) \times S)^S 
        \end{align*}
        We denote this composite by $\mathop{\mathrm{seql}}$.
\end{definition}
Intuitively the operation $\mathop{\mathrm{seql}}$ unfolds the resumption on
the left and performs a certain action depending on whether this resumption
halts or resumes after receiving an input: if it halts then
$\mathop{\mathrm{seql}}$ yields control to the resumption on the right; if it
resumes $\mathop{\mathrm{seql}}$ simply attaches the resumption on the right to
the respective continuation.  Note that there exists an analogous operation
$\mathrm{seqr}$ which starts by unfolding the resumption on the right.  Note as
well that $\seqI$ is continuous by construction.

\begin{definition}[Parallel composition]
        Define the map $\parI : \nu R \times \nu R \to \nu R$
        as the morphism that is induced by primitive corecursion and the
        continuous map $\uplus^S \comp \langle \pi_s \times \pi_s \rangle_{s \in S} \comp
        \pv{\mathop{\mathrm{seql}}}{\mathop{\mathrm{seqr}}}$.
\end{definition}
We finally present our intensional semantics $\asem{-}$. It is defined
inductively on the syntactic structure of programs $P \in \mathrm{Pr}$ and
assigns to each program  $P$ a denotation $\asem{P}\in \nu R$. For simplicity
we will often treat the map $\mathrm{unfold}$ as the identity. Note that every
$\asem{P}(s)$ (for $s \in S$) is thus an element of a cone with a
semi-lattice operation $\uplus$ and that this algebraic structure extends to
$\nu R$ by pointwise extension. The denotational semantics is presented in
Figure~\ref{fig:sem}. The isomorphisms used in the last two equations denote
the distribution of products over coproducts, more precisely they denote $S
\times 2 \cong S + S$ with the (coherent) domain $2$ defined as the coproduct
$1 + 1$.  These two equations also capitalise on the bijective correspondence
between elements of $\nu R$ and maps $S \to \Pow \PP (S + \nu R \times S)$.
Note as well that in the case of conditionals we prefix the execution of
$\asem{P}$ (and $\asem{Q}$) with $\asem{\prog{skip}}$.  This is to raise the
possibility of the environment altering states between the evaluation of
$\prog{b}$ and carrying on with the respective branch. An analogous approach is
applied to the case of while-loops. Finally the fact that the map from which we
take the least fixpoint $(\mathrm{lfp})$ is continuous follows from $\seqI$,
copairing, and pre-composition being continuous.

\begin{figure}
        \begin{align*}
                \asem{\prog{skip}} & = s \mapsto x(\{ 1 \cdot s \})
                \\
                \asem{\prog{a}} & = s \mapsto x(\{ \sem{\prog{a}}(s)\})
                \\
                \asem{P ; Q} & = \asem{P} \seqI \, \asem{Q}
                \\
                \asem{P \parallel Q} & = \asem{P} \parI \, \asem{Q}
                \\
                \asem{P +_\prog{p} Q} & = \prog{p} \cdot \asem{P} + 
                (1-\prog{p}) \cdot \asem{Q}
                \\
                \asem{P + Q} & = \asem{P} \uplus
                \asem{Q}
                \\
                \asem{\prog{if} \, \prog{b} \, \prog{then} \, P \, \prog{else} \, Q}                             & = 
                [\asem{\prog{skip}} \seqI \, \asem{Q}, 
                \asem{\prog{skip}} \seqI \, \asem{P}] \, \comp \cong  
                \comp \, \pv{\id}{\sem{\prog{b}}}
                \\
                \asem{\prog{while} \> \prog{b} \> P} & =
                \mathop{\mathrm{lfp}} \Big (r \mapsto [\asem{\prog{skip}},
                        \asem{\prog{skip}} \seqI \left (\asem{P} \seqI\, r \right )] 
                        \, \comp \cong \comp \, \pv{\id}{\sem{\prog{b}}} \Big )
        \end{align*}
        \caption{Intensional semantics}
        \label{fig:sem}
\end{figure}

The denotational semantics thus defined may seem complex, but it actually has a
quite simple characterisation that involves the small-step operational
semantics.  This is given in the following theorem (and corresponds to the
commutativity of the left rectangle in Diagram~\ref{eq:diag}).

\begin{theorem}
        \label{theo:sem_eq} 
        For every program $P \in \mathrm{Pr}$ the denotation $\asem{P}$ is
        (up-to isomorphism) equal to the map, 
        \[
                s \mapsto  
                x \left ( \left \{
                                \textstyle{\sum_i}\, p_i 
                                \cdot \pv{\asem{P_i}}{s_i} + \textstyle{\sum_j}\, p_j \cdot s_j
                                \mid
                                \pv{P}{s} \longrightarrow
                                \textstyle{\sum_i}\, p_i 
                                \cdot \pv{P_i}{s_i} + \textstyle{\sum_j}\, p_j \cdot s_j
                        \right \} \right )
                \]
\end{theorem}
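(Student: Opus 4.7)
The plan is to proceed by structural induction on $P$, verifying case by case that $\asem{P}(s)$ (as computed via Figure~\ref{fig:sem}) coincides, under the isomorphism $\unfold$, with the set of valuations listed on the right-hand side of the statement. The base cases $P = \prog{skip}$ and $P = \prog{a}$ are immediate: each operational rule produces a single valuation that is literally the unique generator of $\asem{P}(s)$.

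The technical heart of the induction is a one-step unfolding law for $\seqI$, which I would prove by expanding the primitive corecursion used in its definition. The law reads as follows: for all $r_1, r_2 \in \nu R$ and states $s$, if $r_1(s) = x(F)$ admits a representative $\sum_i p_i \cdot \pv{r_{1,i}}{s_i} + \sum_j p_j \cdot s_j \in F$, then the corresponding generator of $(r_1 \seqI r_2)(s)$ is $\sum_i p_i \cdot \pv{r_{1,i} \seqI r_2}{s_i} + \sum_j p_j \cdot \pv{r_2}{s_j}$. The derivation tracks how the strength pairs $r_2$ into every branch, how the isomorphism $(S + \nu R \times S) \times \nu R \cong (\nu R + \nu R \times \nu R) \times S$ routes halted continuations to the ``already in $\nu R$'' side and live continuations to the ``new computation'' side, and how the final $\Pow\PP(\inr)$ together with the copairing $[\id, h]$ arising from primitive corecursion then applies $\seqI$ recursively to the live side. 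Granted this law, the case $P; Q$ follows immediately from the inductive hypothesis on $\asem{P}$ combined with the small-step rule for sequential composition. The case $P \parallel Q$ is analogous: I would derive symmetric one-step laws for $\mathrm{seql}$ and $\mathrm{seqr}$ and then exploit the fact that $\parI$ joins them through $\uplus^S$, matching the two symmetric parallel rules in Figure~\ref{fig:op_sem}.

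The cases $P +_\prog{p} Q$ and $P + Q$ fall out from the inductive hypothesis together with the algebraic structure of $\Pow\PP$: scaling and addition on valuations yield exactly those valuations appearing in the operational rule for $+_\prog{p}$, while $\uplus$ performs a set-theoretic union of generators, matching the two rules for $+$. The conditional case reduces to a case analysis on $\sem{\prog{b}}(s)$; the prefixed $\asem{\prog{skip}}$ in Figure~\ref{fig:sem} ensures that each branch is reached via a single weight-$1$ one-step transition, precisely as in the operational rule. For $\prog{while}\,\prog{b}\,P$ I would use the fact that $\asem{\prog{while}\,\prog{b}\,P}$ is in particular a fixpoint of the continuous endomap from which its denotation is taken; applying the fixpoint equation together with the one-step law for $\seqI$ and a case analysis on $\sem{\prog{b}}(s)$ then yields either $x(\{1\cdot s\})$ or $x(\{1\cdot \pv{\asem{P;\prog{while}\,\prog{b}\,P}}{s}\})$, matching the two operational rules for the loop.

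The main obstacle is establishing the one-step law for $\seqI$ in a clean form, since it requires carefully unwinding primitive corecursion and tracking how the basic representatives $x(F)$ of the mixed powerdomain propagate through strength, the distributivity isomorphism, and $\Pow\PP(\inr)$. Once this law is in place the remaining cases are routine bookkeeping driven by Figures~\ref{fig:op_sem} and~\ref{fig:sem}.
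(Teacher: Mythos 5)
Your proposal is correct and follows essentially the same route as the paper, which proves the theorem by structural induction on programs and, for while-loops, by appeal to the fixpoint equation. The paper gives no further detail, and your one-step unfolding law for $\seqI$ (obtained by unwinding primitive corecursion through the strength, the distributivity isomorphism, and $\Pow\PP(\inr)$) is exactly the content that the paper's ``straightforward induction'' leaves implicit.
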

\begin{proof}
        Follows from straightforward induction over the syntactic structure of
        programs, where for the case of while-loops we resort to the fixpoint
        equation.
\end{proof}

\smallskip
\noindent
\textbf{The extensional collapse.} We now present the extensional collapse of
the intensional semantics. Intuitively given a program $P$ the collapse removes
all intermediate computational steps of $\asem{P}$, which as mentioned
previously is a `resumptions tree'.  Technically the collapse makes crucial use
of the fact that the domain $\nu R$ is not only the final coalgebra of
$R$-coalgebras but also the initial algebra of strict $R$-algebras.  More
specifically we define the extensional collapse as the initial strict algebra
morphism,
\[
        \xymatrix@C=80pt{
                \nu R \ar@{.>}[r]^(0.45){\mathrm{ext}} 
                \ar@{}[d]|-(0.5){\cong}
                \ar@/^10pt/[d]^(0.50){\unfold}
                & \Pow \PP (S)^S 
                \\
                \Pow \PP (S + \nu R \times S)^S
                \ar[r]_(0.45){\Pow \PP(\id + \mathrm{ext} \times \id )^S}
                \ar@/^10pt/[u]^(0.50){\fold}
                & 
                \ar[u]_(0.50){{[\eta,\app]^\star}^S} 
                \Pow \PP (S + \Pow \PP(S)^S \times S)^S
        }
\]
Concretely the extensional collapse is defined by $\mathrm{ext}(\asem{P}) =
\bigvee_{n \in \Nats} \, \mathrm{ext}_n(\asem{P}_n)$ where $\asem{P}_n = \pi_n
(\asem{P})$ with $\pi_n : \nu R \to R^n(1)$ (recall that $\nu R$ is a certain
limit with projections $\pi_n : \nu R \to R^n(1)$). The maps $\mathrm{ext}_n$
on the other hand are defined inductively over the natural numbers (and by
recurring to the monad laws) by,
\begin{align*}
        \mathrm{ext}_0 & = \bot \mapsto (s \mapsto x(\{\bot\})) : R^0(1) \to \Pow \PP(S)^S
                \\
        \mathrm{ext}_{n+1} & =
        {[\eta,\app \comp (\mathrm{ext}_n \times \id)]^\star}^S  : R^{n+1} (1) \to \Pow \PP(S)^S
\end{align*}
The map $\mathrm{ext}$ enjoys several useful properties: it is both strict and
continuous (by construction). It is also both linear and $\uplus$-preserving
since it is a supremum of such maps. In the paper's extended
version~\cite{neves24} we show that if the parallel operator is dropped then
the semantics obtained from the composite $\mathrm{ext}\comp \asem{-}$
(hencerforth denoted by $\sem{-}$) is really just a standard monadic semantics
induced by the monad $\Pow \PP$. In other words, the concurrent semantics
obtained from $\mathrm{ext} \comp \asem{-}$ is a conservative extension of the
latter.

We continue unravelling key properties of the composition $\mathrm{ext} \comp
\asem{-}$. More specifically we will now show that for every natural number $n \in
\Nats$, program $P$, and state $s$, the set $\mathrm{ext}_n(\asem{P}_n)(s)$ is
always \emph{finitely generated}, \ie\ of the form $x(F)$ for a certain
\emph{finite} set $F$. We will use this property to prove a proposition which
formally connects the operational and the extensional semantics $\sem{-}$ w.r.t.
$n$-step evaluations.  Thus, the fact that every
$\mathrm{ext}_n(\asem{P}_n)(s)$ is finitely generated follows by induction over
the natural numbers and by straightforward calculations that yield the
equations,
\begin{align}
        \label{eq:fin}
        \begin{split}
        \mathrm{ext}_0(\asem{P}_0)(s) & = x(\{\bot\})
        \\
        \mathrm{ext}_{n+1}(\asem{P}_{n+1})(s) & = x
        \textstyle{
                \left ( \bigcup 
                        \left \{ \sum_i p_i \cdot F_i + \sum_j p_j \cdot s_j \mid
                        \pv{P}{s} \longrightarrow
                        \sum_i p_i \cdot \pv{P_i}{s_i} + \sum_j p_j \cdot s_j 
                \right \} \right )
        }
\end{split}
\end{align}
with every $F_i$ a finite set that satisfies $\mathrm{ext}_n(\asem{P_i}_n)(s_i)
= x(F_i)$. Note that the previous equations provide an explicit construction of
a finite set $F$ such that $x(F) = \mathrm{ext}_n(\asem{P}_n)(s)$. From this we
obtain the formulation and proof of Proposition~\ref{main:theo}, which we
present next.

\begin{proposition}
        \label{main:theo}
        Take a program $P$, state $s$, and
        natural number $n$. The equation below holds.
        \[
                \conv{F} = \left \{ \mu \mid \pv{P}{s} \Downarrow^{\sch,n} \mu
                \text{ with } \sch \text{ a scheduler}
                \right \}
        \]
\end{proposition}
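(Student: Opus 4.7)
The plan is to prove the statement by induction on $n \in \Nats$. For the base case $n = 0$, the equations in~\eqref{eq:fin} give $F = \{\bot\}$, so $\conv{F} = \{\bot\}$; on the operational side, the only applicable rule forces $\pv{P}{s} \Downarrow^{\sch,0} \bot$ for every scheduler $\sch$. Hence both sides coincide.

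For the inductive step, I would fix $P$ and $s$ and use~\eqref{eq:fin} to write $F = \bigcup \{\sum_i p_i \cdot F_i + \sum_j p_j \cdot s_j \mid \pv{P}{s} \longrightarrow \sum_i p_i \cdot \pv{P_i}{s_i} + \sum_j p_j \cdot s_j \}$, where each $F_i$ satisfies $\mathrm{ext}_n(\asem{P_i}_n)(s_i) = x(F_i)$ and, by the induction hypothesis, $\conv{F_i} = \{\mu \mid \pv{P_i}{s_i} \Downarrow^{\sch, n} \mu\}$. For the inclusion $(\supseteq)$: given $\mu$ with $\pv{P}{s} \Downarrow^{\sch, n+1} \mu$, the big-step rule forces a decomposition $\sch(\pv{P}{s}) = \sum_k p_k \cdot \nu_k$ with $\nu_k = \sum_i p_{k,i} \cdot \pv{P_{k,i}}{s_{k,i}} + \sum_j p_{k,j} \cdot s_{k,j} \in \pv{P}{s}\longrightarrow$, and $\mu = \sum_k p_k \cdot (\sum_i p_{k,i} \cdot \mu_{k,i} + \sum_j p_{k,j} \cdot s_{k,j})$ with each $\mu_{k,i}$ arising from an $n$-step evaluation of $\pv{P_{k,i}}{s_{k,i}}$. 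By the induction hypothesis $\mu_{k,i} \in \conv{F_{k,i}}$, so each inner sum lies in $\conv{(\sum_i p_{k,i} \cdot F_{k,i} + \sum_j p_{k,j} \cdot s_{k,j})} \subseteq \conv{F}$, and $\mu$ is a convex combination of these, hence in $\conv{F}$.

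For the reverse inclusion $(\subseteq)$: I would write an arbitrary $\mu \in \conv{F}$ as $\sum_t q_t \cdot \eta_t$ with each $\eta_t$ belonging to one of the sets indexing the union in~\eqref{eq:fin}, \emph{viz.} $\eta_t = \sum_i p_{t,i} \cdot \mu_{t,i} + \sum_j p_{t,j} \cdot s_{t,j}$ for some small-step transition $\nu_t \in \pv{P}{s}\longrightarrow$ and choices $\mu_{t,i} \in \conv{F_{t,i}}$. The induction hypothesis then supplies schedulers $\sch_{t,i}$ with $\pv{P_{t,i}}{s_{t,i}} \Downarrow^{\sch_{t,i}, n} \mu_{t,i}$, and I assemble a single scheduler $\sch$ that returns $\sum_t q_t \cdot \nu_t$ at the root history $\pv{P}{s}$ and dispatches to $\sch_{t,i}$ on any extension of the form $\pv{P}{s}\nu_t\pv{P_{t,i}}{s_{t,i}}\cdots$. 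Applying the inductive big-step rule then yields $\pv{P}{s} \Downarrow^{\sch, n+1} \mu$. The main obstacle is the assembly of $\sch$ in this last step: the sub-schedulers are produced independently and must be patched into one partial function while ensuring that the dispatch is unambiguous. This is where the essential uniqueness of the representation $\sum_k p_k \cdot \nu_k$ (following from the discreteness of $\Dist(S + (\SPrg \times S))$ via Theorem~\ref{theo:locin}) becomes critical, together with finiteness of $\pv{P}{s}\longrightarrow$ from Proposition~\ref{theo:finitely}, which ensures that the decomposition indices $t$ range over a finite, well-defined set.
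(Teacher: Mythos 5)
Your proposal follows essentially the same route as the paper's proof (base case, then the two inclusions by induction, with the scheduler assembled from the sub-schedulers supplied by the induction hypothesis for the $\subseteq$ direction), but two steps are left unjustified and one of them is a genuine gap. The induction hypothesis as you state it, $\conv{F_i} = \{\mu \mid \pv{P_i}{s_i} \Downarrow^{\sch,n} \mu\}$, concerns evaluations at the \emph{empty} history, whereas the premises of the big-step rule (Figure~\ref{fig:bop_sem}) produce judgements of the form $\pv{P}{s}\nu_k\pv{P_{k,i}}{s_{k,i}} \Downarrow^{\sch,n} \mu_{k,i}$, \ie\ at a non-empty history. As written, the induction hypothesis therefore does not apply in either direction of your inductive step. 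The paper repairs this by strengthening the induction invariant to range over all histories $h$; alternatively one could prove a shifting lemma ($h\pv{Q}{t} \Downarrow^{\sch,n}\mu$ iff $\pv{Q}{t}\Downarrow^{\sch',n}\mu$ for $\sch'(l) = \sch(hl)$), but some such bridge must be made explicit.

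For the $\subseteq$ direction you correctly identify the assembly of $\sch$ as the delicate point, but the resolution is not the one you gesture at. Essential uniqueness of the representation $\sum_t q_t \cdot \nu_t$ and finiteness of $\pv{P}{s}\longrightarrow$ do not by themselves make the dispatch unambiguous: nothing prevents two indices $t \neq t'$ in your decomposition of $\mu$ from sharing the same small-step valuation $\nu_t = \nu_{t'}$, in which case the history $\pv{P}{s}\nu_t\pv{P_{t,i}}{s_{t,i}}$ would have to be routed to two potentially conflicting sub-schedulers $\sch_{t,i}$ and $\sch_{t',i}$. The paper's fix is to first normalise the convex combination so that the $\nu_t$ are pairwise distinct, which is possible precisely because each $\conv{F_{t,i}}$ is convex: duplicate terms can be merged by replacing $\mu_{t,i}$ and $\mu_{t',i}$ with their (renormalised) convex combination, which stays in $\conv{F_{t,i}}$. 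With that normalisation in place your construction of $\sch$ goes through, modulo the routine inner induction showing $\pv{P}{s}\nu_t\pv{P_{t,i}}{s_{t,i}} \Downarrow^{\sch,m}\mu$ iff $\pv{P_{t,i}}{s_{t,i}} \Downarrow^{\sch_{t,i},m}\mu$, which you implicitly assume when you ``apply the inductive big-step rule.''
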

\begin{proof}
First recall that for a natural number $n \in \Nats$, a program $P$, and a
state $s$, we use the letter $F$ to denote the finite set that generates
$\mathrm{ext}_n(\asem{P}_n)(s)$. Note as well that from~\cite[Lemma 2.8]{tix09}
we obtain,
        \begin{align}
                \begin{split}
                        \conv{F}
                & =\mathrm{conv}
                \textstyle{
                \left ( \bigcup \left \{ \sum_i p_i \cdot F_i + \sum_j p_j \cdot s_j \mid
                        \pv{P}{s} \longrightarrow
                \sum_i p_i \cdot \pv{P_i}{s_i} + \sum_j p_j \cdot s_j \right \}\right ) 
                }
                \\
                &
                = \mathrm{conv}
                \textstyle{
                        \left ( \bigcup \left \{ \sum_i p_i 
                                        \cdot \conv{F_i} + \sum_j p_j \cdot s_j \mid
                        \pv{P}{s} \longrightarrow
                \sum_i p_i \cdot \pv{P_i}{s_i} + \sum_j p_j \cdot s_j \right \}\right ) 
                }
                \label{eq:unf}
                \end{split}
        \end{align}
We crucially resort to this observation for the proof, as detailed next.  As
stated in the proposition's formulation we need to prove that the equation,
        \[
                \conv{F} = \left \{ \mu \mid \pv{P}{s} \Downarrow^{\sch,n} \mu
                \text{ with } \sch \text{ a scheduler}
                \right \}
        \]
        holds. We will show that the two respective inclusions hold,
        starting with the case $\supseteq$.  The proof follows by induction
        over the natural numbers and by strengthening the induction invariant
        to encompass all histories $h$ and not just the empty one. The base
        case is direct.  For the inductive step $n+1$ assume that $h\pv{P}{s}
        \Downarrow^{\mathcal{S},n+1} \mu$ for some valuation $\mu$ and
        scheduler $\mathcal{S}$. Then according to the deductive rules of the
        big-step operational semantics (Figure~\ref{fig:bop_sem}) we obtain the
        following conditions:
        \begin{enumerate}
                \item $\sch(h\pv{P}{s}) = \sum_k p_k \cdot \nu_k$ for 
                        some convex combination $\sum_k p_k \cdot (-)$.
                        Moreover $\forall k. \, \pv{P}{s} \longrightarrow \nu_k$
                        with each valuation $\nu_k$ of the form
                        $\sum_i p_{k,i} \cdot \pv{P_{k,i}}{s_{k,i}} +
                        \sum_j p_{k,j}  \cdot s_{k,j}$;
                \item $\forall k,i. \, h\pv{P}{s}\nu_k\pv{P_{k,i}}{s_{k,i}}
                        \Downarrow^{\sch,n} \mu_{k,i}$
                      for some valuation $\mu_{k,i}$; 
                \item and finally 
                $\mu = \textstyle{
                                        \sum_k p_k  \cdot
                                        \left (\sum_{i} p_{k,i} \cdot \mu_{k,i}  + 
                                        \sum_{j} p_{k,j} \cdot s_{k,j} \right )
                                    }$.
        \end{enumerate}
        It follows from the induction hypothesis and the second condition that
        for all $k,i$ the valuation $\mu_{k,i}$ is in $\conv{F_{k,i}}$, where
        $F_{k,i}$ is the finite set that is inductively built from $P_{k,i}$
        and ${s_{k,i}}$ as previously described. Thus for all $k$
        each valuation $\sum_{i} p_{k,i} \cdot \mu_{k,i} + \sum_{j} p_{k,j}
        \cdot s_{k,j}$ is an element of the set $\sum_{i} p_{k,i} \cdot
        \conv{F_{k,i}} +  \sum_{j} p_{k,j} \cdot s_{k,j}$. Also the latter is
        contained in $\conv{F}$ according to Equations~\eqref{eq:unf} and the
        first condition. The proof then follows from the third condition and
        the fact that $\conv{F}$ is convex. 

        Let us now focus on the inclusion $\subseteq$. The base case is again
        direct. For the inductive step $n+1$ recall Equation~\eqref{eq:unf} and
        take a convex combination $\sum_k p_k \cdot \mu_k$ in the set,
        \begin{equation*}
                \conv{F} = 
                \mathrm{conv} 
                \textstyle{
                        \left ( \bigcup \left \{ \sum_i p_i \cdot \mathrm{conv} F_i
                                        + \sum_j p_j \cdot s_j \mid
                        \pv{P}{s} \longrightarrow
                \sum_i p_i \cdot \pv{P_i}{s_i} + \sum_j p_j \cdot s_j \right \}\right ) }
        \end{equation*}
        We can safely assume that every $\mu_k$ belongs to some set $\sum_i
        p_{k,i} \cdot \conv{F_{k,i}} + \sum_j p_{k,j} \cdot s_{k,j}$ generated
        by a corresponding valuation $\sum_i p_{k,i} \cdot
        \pv{P_{k,i}}{s_{k,i}} + \sum_j p_{k,j} \cdot s_{k,j}= \nu_k \in
        \pv{P}{s} \longrightarrow$. Each set $F_{k,i}$ is inductively
        built from $P_{k,i}$ and $s_{k,i}$ in the way that was previously
        described. Crucially we can
        also safely assume that all valuations $\nu_k$ involved are pairwise
        distinct, by virtue of all sets $\conv{F_{k,i}}$ being convex and by
        recurring to the normalisation of subconvex valuations.  

        Next, by construction every $\mu_k$ is of the form $\sum_{i} p_{k,i}
        \cdot \mu_{k,i}  + \sum_{j} p_{k,j} \cdot s_{k,j}$ with $\mu_{k,i} \in
        \conv{F_{k,i}}$, and by the induction hypothesis we deduce that
        $\forall k,i. \, \pv{P_{k,i}}{s_{k,i}} \Downarrow^{\mathcal{S}_{k,i},n}
        \mu_{k,i}$ for some scheduler $\mathcal{S}_{k,i}$. Our next step is to
        construct a single scheduler $\sch$ from all schedulers $\sch_{k,i}$.
        We set $\sch(\pv{P}{s}) = \sum_k p_k \cdot \nu_k$ and $\forall k,i.\,
        \sch(\pv{P}{s}\nu_kl) = \sch_{k,i} (l)$ for every input $l$ with
        $\pv{P_{k,i}}{s_{k,i}}$ as prefix. We set $\sch$ to be undefined w.r.t.
        all other inputs. One then easily proves by induction over the natural
        numbers and by inspecting the definition of the big-step operational
        semantics (Figure~\ref{fig:bop_sem}) that the equivalence,
        \[
                        \pv{P}{s} \nu_k 
                        \pv{P_{k,i}}{s_{k,i}} 
                        \Downarrow^{\mathcal{S},m}
                        \mu
                        \text{ iff }
                        \pv{ P_{k,i} }{ s_{k,i} } \Downarrow^{\mathcal{S}_{k,i}, m}
                        \mu
        \]
        holds for all $k,i$ and natural numbers $m \in \Nats$. 

        Finally one just needs to apply the definition of the big-step
        operational semantics (Figure~\ref{fig:bop_sem}) to obtain $\pv{P}{s}
        \Downarrow^{\mathcal{S},n+1} \sum_k p_k \cdot \mu_k$ as previously
        claimed.
\end{proof}
To conclude this section, we remark that the denotational semantics just
presented is quite different from the well-known \emph{probabilistic testing
semantics} in process algebra (see a brief survey of such semantics for example
in~\cite[Chapters 4 and 5]{deng15}). For instance while testing semantics are
usually based on an operational semantics, denotational ones, as presented
here, are supposed to be independent. Moreover not only they are different in
nature and formulated in quite orthogonal contexts, they sometimes
disagree on which programs/processes should be equated. For example while in
reference~\cite[Chapter 5]{deng15} the terms $P +_\prog{p} (Q + R)$ and $(P
+_\prog{p} Q) + (P +_\prog{p} Q)$ are deemed non-equivalent~\cite[Example
5.7]{deng15}, in our case we have,
\begin{align*}
        & \asem{P +_\prog{p} (Q + R)} 
        \\
        &
        = \text{\big \{ Semantics definition  \big \}}
        \\
        & \prog{p} \cdot \asem{P} + (1 - \prog{p}) \cdot (\asem{Q + R})
        \\
        &
        = \text{\big \{ Semantics definition  \big \}}
        \\
        &
        \prog{p} \cdot \asem{P} + (1 - \prog{p}) \cdot (  \asem{Q} \uplus
        \asem{R})
        \\
        &
        = \text{\big \{ Equational theory of the mixed powerdomains~\cite{tix09}  \big \}}
        \\
        &
        \prog{p} \cdot \asem{P} + ( (1 - \prog{p}) \cdot  \asem{Q} \uplus
        (1 - \prog{p}) \cdot  \asem{R})
        \\
        &
        = \text{\big \{ Equational theory of the mixed powerdomains~\cite{tix09}  \big \}}
        \\
        &
        ( \prog{p} \cdot \asem{P} +  (1 - \prog{p}) \cdot  \asem{Q} ) \uplus
        ( \prog{p} \cdot \asem{P} +  (1 - \prog{p}) \cdot  \asem{R})
        \\
        &
        = \text{\big \{ Semantics definition  \big \}}
        \\
        &
        ( \asem{P +_\prog{p} Q} \uplus
        ( \asem{P +_\prog{p} R})
        \\
        &
        = \text{\big \{ Semantics definition  \big \}}
        \\
        &
        \asem{(P +_\prog{p} Q) + (P +_\prog{p} R)}
\end{align*}
Of course the equation just established will be operationally justified by our
computational adequacy theorem, which is proved in the following section. 

\section{Computational adequacy}
\label{sec:adeq}

In this section we prove the paper's main result: the semantics $\sem{-}$ in
Section~\ref{sec:den} is adequate w.r.t.  the operational semantics in
Section~\ref{sec:lang}. As mentioned in Section~\ref{sec:intro} the formulation
of adequacy relies on the logic that was presented in Section~\ref{sec:lang}.
Actually it relies on different fragments of it depending on which mixed
powerdomain one adopts: in the lower case (\ie\ angelic or may non-determinism)
formulae containing $\demon \varphi$ are forbidden whilst in the upper case
(\ie\ demonic or must non-determinism) formulae containing $\angel  \varphi$
are forbidden.  The biconvex case does not impose any restriction, \ie\ we have
the logic \emph{verbatim}. For simplicity, we will use $\mathcal{L}_l$,
$\mathcal{L}_u$, and $\mathcal{L}_b$ to denote respectively the lower, upper,
and biconvex fragments of the logic.

Recall from Section~\ref{sec:lang} that we used pairs $\pv{P}{s}$ and the
operational semantics of concurrent pGCL to interpret the logic's formulae.
Recall as well that we treat a formula $\varphi$ as an open subset of the space
$\PP(S)$ with $S$ discrete. As the next step towards adequacy, note that the
elements of $\Pow \PP (S)$ also form an interpretation structure for the logic:
specifically we define a satisfaction relation $\models$ between the elements
$A \in \Pow \PP(S)$ and formulae $\phi$ by,
\begin{align*}
        A  \models \angel \varphi
        \text{ iff }
        \textit{for some } \mu \in A
        \text{ we have } \mu \in \varphi
        \hspace{0.8cm} 
        A  \models \demon \varphi
        \text{ iff }
        \textit{for all } \mu \in A
        \text{ we have } \mu \in \varphi
\end{align*}
with the remaining cases defined standardly. The formulation of
computational adequacy then naturally arises: for every program $P$ and state
$s$, the equivalence below holds for all formulae $\phi$.
\begin{align}
        \pv{P}{s} \models \phi \text{ iff }
        \sem{P}(s) \models \phi
        \label{eq:equ}
\end{align}
Of course depending on which mixed powerdomain one adopts $\phi$'s universe of
quantification will vary according to the corresponding fragment
$\mathcal{L}_x$ ($x \in \{l,u,b\}$).  The remainder of the current section is
devoted to proving Equivalence~\eqref{eq:equ}.  Actually the focus is only on
formulae of the type $\angel \varphi$ and $\demon \varphi$, for one can
subsequently apply straightforward induction (over the formulae's syntactic
structure) to obtain the claimed equivalence for all $\phi$ of the
corresponding fragment. We start with the angelic case.

\begin{theorem}
       \label{prop:angel}
       Let $\Pow_x \PP$ be either the lower convex powerdomain or the
       biconvex variant (\ie\ $x \in \{l,b\}$). Then for every program $P$,
       state $s$, and formula $\angel \varphi$ the equivalence below holds.
       \[
                \pv{P}{s} \models \angel \varphi
                \text{ iff }
                \sem{P}(s) \models \angel \varphi
       \]
\end{theorem}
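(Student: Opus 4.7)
The plan is to unravel both sides of the equivalence into a statement about the finite sets $F_n$ with $\mathrm{ext}_n(\asem{P}_n)(s) = x(F_n)$ from Equation~\eqref{eq:fin}, and then invoke Proposition~\ref{main:theo} as the bridge to the operational semantics. The main tools are (i) Scott-openness of the subbasic set $\angel \varphi$ in $\Pow_x \PP(S)$, which comes from the explicit description of the Scott topologies of the convex lower and biconvex powercones given in Section~\ref{sec:back}, and (ii) topological openness of $\varphi$ itself, which allows replacing closures and saturations by plain convex hulls.

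First I would observe that $\sem{P}(s) \models \angel \varphi$ is, by definition of the satisfaction relation on $\Pow_x \PP(S)$, the same as $\sem{P}(s) \in \angel \varphi$ when $\angel \varphi$ is viewed as an open subset of $\Pow_x \PP(S)$. Since $\sem{P}(s) = \bigvee_{n} x(F_n)$ is a directed supremum and $\angel \varphi$ is Scott-open, this is equivalent to the existence of some $n$ with $x(F_n) \cap \varphi \neq \emptyset$.

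Next I would show that this last condition is equivalent to $\conv{F_n} \cap \varphi \neq \emptyset$. In the lower case we have $l(F_n) = \overline{\conv{F_n}}$; any $\mu \in l(F_n) \cap \varphi$ lies in the Scott-closure of $\conv{F_n}$, so the open neighbourhood $\varphi$ of $\mu$ must already meet $\conv{F_n}$. In the biconvex case $b(F_n) = \overline{\conv{F_n}} \cap \upclos \conv{F_n}$, and the same openness argument applies since $b(F_n) \subseteq \overline{\conv{F_n}}$, while $\conv{F_n} \subseteq b(F_n)$ gives the easy converse inclusion.

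Finally I would invoke Proposition~\ref{main:theo}: the condition $\conv{F_n} \cap \varphi \neq \emptyset$ says exactly that there is a valuation $\mu \in \varphi$ and a scheduler $\sch$ with $\pv{P}{s} \Downarrow^{\sch,n} \mu$, which is the definition of $\pv{P}{s} \models \angel \varphi$. Chaining these equivalences yields the theorem. I do not anticipate a serious obstacle here; the only delicate point is the openness-of-$\varphi$ step used to discard the closure (and, for $\Pow_b$, the saturation), and this is exactly why the angelic case is substantially easier than the demonic one, where one cannot so freely move from $\overline{\conv{F_n}}$ to $\conv{F_n}$.
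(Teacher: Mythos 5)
Your proposal is correct and follows essentially the same route as the paper: Proposition~\ref{main:theo} as the bridge, Scott-openness of $\angel\varphi$ (i.e.\ its upper-closedness and inaccessibility by directed joins) to pass between $\sem{P}(s)=\bigvee_n x(F_n)$ and the individual stages $x(F_n)$, and openness of $\varphi$ to replace $\overline{\conv{F_n}}$ (and, in the biconvex case, the saturation) by $\conv{F_n}$. The only cosmetic difference is that you discard the closure via the plain topological definition of closure, whereas the paper invokes the domain-theoretic characterisation of Scott-closure together with the inaccessibility of $\varphi$; these amount to the same step here.
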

\begin{proof}
        The left-to-right direction follows from Proposition~\ref{main:theo}
        and the upper-closedness of the open $\angel \varphi$. The
        right-to-left direction uses Proposition~\ref{main:theo}, the
        inaccessibility of the open $\angel \varphi$, the characterisation of
        Scott-closure in domains~\cite[Exercise 5.1.14]{larrecq13}, and the
        inaccessibility of the open $\varphi$.
\end{proof}
As already mentioned in the introduction, Equivalence~\eqref{eq:equ} w.r.t.
formulae of the type $\demon \varphi$ is much thornier to prove. In order to
achieve it we will need the following somewhat surprising result.
\begin{theorem}
        \label{prop:alg}
        Consider a program $P$, a state $s$, and a formula $\demon \varphi$.
        If $\pv{P}{s} \models \demon \varphi$ then there exists a positive
        natural number $\mathbf{z} \in \Nats_+$ such that \emph{for all}
        non-blocking schedulers $\sch$ we have,
        \[
                \pv{P}{s} \Downarrow^{\sch, \mathbf{z}}
                \mu \text{ for some valuation $\mu$ and } \mu \in \varphi
        \]
\end{theorem}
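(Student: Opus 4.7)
The plan is to argue by contradiction, and at the heart of the argument sits a compactness extraction that plays the role of the \emph{topological König's lemma} mentioned in the introduction.

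Suppose no uniform $\mathbf{z}$ exists. Then for every $n \in \Nats_+$ there is a non-blocking scheduler $\sch_n$ such that the (essentially unique, by determinism, Proposition~\ref{theo:det}) valuation $\mu_n$ with $\pv{P}{s} \Downarrow^{\sch_n,n} \mu_n$ does \emph{not} lie in $\varphi$. By monotonicity (Proposition~\ref{theo:det}) together with the upward-closedness of the Scott-open $\varphi$, the partial valuation produced by $\sch_n$ at every earlier step is also outside $\varphi$. The goal is now to assemble a single non-blocking scheduler $\sch^\infty$ for $\pv{P}{s}$ whose partial valuation at every depth misses $\varphi$, which directly contradicts $\pv{P}{s} \models \demon \varphi$.

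The assembly of $\sch^\infty$ is where the compactness input enters. At each reachable input of the form $h\pv{Q}{t}$ a scheduler must choose a subdistribution of $\pv{Q}{t} \longrightarrow$, and by Proposition~\ref{theo:finitely} this set is finite; hence the space of such choices is a finite-dimensional simplex, and in particular compact. Because the underlying transition tree is finitely branching at each node, only finitely many histories of length $\leq n$ arise from $\pv{P}{s}$, so the restriction of any scheduler to depth $n$ lives in a \emph{finite} product of simplexes, which is compact. I would therefore extract, by a standard diagonal argument across $n$, a subsequence $(\sch_{n_k})$ whose coordinate-wise limit defines a scheduler $\sch^\infty$ on all reachable inputs.

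The work then reduces to verifying two properties of $\sch^\infty$. First, $\sch^\infty$ is non-blocking at $\pv{P}{s}$, which follows because definedness at each fixed history is a closed condition preserved under the coordinate-wise limit. Second, for every fixed depth $n$, the partial valuation $\mu_n^\infty$ with $\pv{P}{s} \Downarrow^{\sch^\infty,n} \mu_n^\infty$ lies outside $\varphi$; this uses that the map sending a scheduler (restricted to depth $n$) to its depth-$n$ partial valuation is continuous in the product topology of the simplexes and the weak topology on $\PDist(S)$, and that the complement of $\varphi$ is closed, so the property ``lies outside $\varphi$'' survives the limit. These two facts together produce the desired contradicting scheduler.

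The main obstacle is precisely this continuity-and-closedness step: one must ensure that the depth-$n$ evaluation functional really is continuous on the simplex product (so that $\mu_n^{\sch_{n_k}} \to \mu_n^\infty$) and that the complement of $\varphi$ in $\PP(S)$ behaves well with respect to this form of convergence. Verifying this cleanly is what makes the demonic case substantially harder than the angelic one handled in Theorem~\ref{prop:angel}, and it is where the ``topological'' strengthening of König's lemma does its real work.
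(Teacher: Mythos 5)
Your compactness-and-diagonalisation argument is sound and matches the strategy the paper itself describes for Theorem~\ref{prop:alg}: the paper defers the details to its extended version but explicitly characterises the proof as a K\"onig-type argument in which finite branching is replaced by compact branching over scheduler choices and paths become scheduler-generated probabilistic traces, which is exactly what your extraction of a limiting non-blocking scheduler $\sch^\infty$ from the simplex products implements. The points you flag as the main obstacles (continuity of the depth-$n$ evaluation on the finite product of simplexes, and closedness of the complement of the Scott-open $\varphi$ restricted to valuations of fixed finite support) do go through, so your outline is essentially the same proof in sequential form.
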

In other words whenever $\pv{P}{s} \models \demon \varphi$ there exists an
upper bound -- \ie\ a natural number $\mathbf{z} \in \Nats_+$ such that
\emph{all} non-blocking schedulers (which are \emph{uncountably} many) reach
condition $\varphi$ in \emph{at most} $\mathbf{z}$-steps. Such a property
becomes perhaps less surprising when one recalls K\"onig's
lemma~\cite{franchella97}. The latter in a particular form states that every
finitely-branching tree with each path of finite length must have \emph{finite}
depth (which means that there is an upper-bound on the length of all paths).
From this perspective each non-blocking scheduler intuitively corresponds to a
path and the fact that each path has finite length corresponds to the
assumption that each non-blocking scheduler reaches condition $\varphi$ in a
finite number of steps (\ie\ $\pv{P}{s} \models \demon \varphi$). 

Our proof of Theorem~\ref{prop:alg} is based on a topological generalisation of
the previous analogy to K\"onig's lemma, in which among other things the
condition `finitely-branching' is generalised to `compactly-branching' and the
notion of a path is converted to that of a `probabilistic trace' generated by a
scheduler.  The technical details of this proof and a series of auxiliary
results can be consulted in the paper's extended version~\cite{neves24}. Finally by
appealing to Theorem~\ref{prop:alg} we obtain the desired equivalence and
establish computational adequacy.
\begin{theorem}
       \label{prop:demon}
       Let $\Pow_x \PP$ be the upper convex powerdomain or the biconvex
       variant (\ie\ $x \in \{u,b\}$). Then for every program $P$, state $s$,
       and formula $\demon \varphi$ the equivalence below holds.
       \[
                \pv{P}{s} \models \demon \varphi
                \text{ iff }
                \sem{P}(s) \models \demon \varphi
       \]
\end{theorem}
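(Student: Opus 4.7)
The plan is to leverage the explicit finite description $\mathrm{ext}_n(\asem{P}_n)(s) = x(F_n)$ supplied by Equations~\eqref{eq:fin}, together with Proposition~\ref{main:theo}, which identifies $\conv{F_n}$ with the set of valuations $\mu$ reachable via $\pv{P}{s} \Downarrow^{\sch,n} \mu$ as $\sch$ ranges over schedulers. Since $\sem{P}(s) = \bigvee_n x(F_n)$ is an ascending sup in $\Pow_x \PP(S)$ for $x \in \{u,b\}$, the upward closures $\upclos \conv{F_n}$ form a \emph{descending} chain of saturated supersets of $\sem{P}(s)$, so in both the upper and biconvex cases $\sem{P}(s) \models \demon \varphi$ unfolds to $\sem{P}(s) \subseteq \varphi$.

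For the left-to-right direction I would invoke the uniform bound furnished by Theorem~\ref{prop:alg}: choose $\mathbf{z} \in \Nats_+$ such that every non-blocking scheduler drives $\pv{P}{s}$ into $\varphi$ in exactly $\mathbf{z}$ steps. The crucial intermediate claim is that \emph{every} $\nu \in \conv{F_\mathbf{z}}$ already lies in $\varphi$. By Proposition~\ref{main:theo}, such a $\nu$ witnesses $\pv{P}{s} \Downarrow^{\sch,\mathbf{z}} \nu$ for some scheduler $\sch$; since $\sch$ is only queried on the finitely many histories visited in the first $\mathbf{z}$ steps, it can be extended to a total (hence non-blocking) scheduler $\sch'$ that agrees with $\sch$ on those histories, and the outcome $\nu$ is preserved. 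Theorem~\ref{prop:alg} then places $\nu$ in $\varphi$; upward-closedness of $\varphi$ yields $\upclos \conv{F_\mathbf{z}} \subseteq \varphi$, and consequently $\sem{P}(s) \subseteq \upclos \conv{F_\mathbf{z}} \subseteq \varphi$.

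For the right-to-left direction Theorem~\ref{prop:alg} is not needed. Fix a non-blocking scheduler $\sch$ and use Proposition~\ref{theo:det} to produce the ascending chain $\mu_n$ of valuations with $\pv{P}{s} \Downarrow^{\sch,n} \mu_n$. Each $\mu_n$ lies in $\conv{F_n}$ by Proposition~\ref{main:theo}, so the directed supremum $\mu = \bigvee_n \mu_n \in \PP(S)$ belongs to $\upclos \conv{F_n}$ for every $n$. In the upper case this intersection coincides with $\sem{P}(s)$; in the biconvex case an analogous sup computation still shows that $\mu$ inhabits $\sem{P}(s) \subseteq \varphi$. Scott-openness of $\varphi$ (inaccessibility by directed sups) then delivers some $n$ with $\mu_n \in \varphi$, exactly the witness required for $\pv{P}{s} \models \demon \varphi$.

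The principal obstacle is the biconvex case of the right-to-left direction: the directed supremum in $\Pow_b \PP(S)$ is more delicate than in $\Pow_u \PP(S)$, because the sup of lenses is not their set-theoretic intersection, so certifying that the limit valuation $\mu$ genuinely inhabits $\sem{P}(s)$ requires combining the descending chain of upward closures with an independent argument on the Scott-closed lower parts. The scheduler-extension step used in the opposite direction is also technically delicate but conceptually routine, once one notes that only finitely many queries of $\sch$ contribute to the $\mathbf{z}$-step evaluation.
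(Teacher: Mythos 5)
Your left-to-right direction is essentially the paper's argument: Theorem~\ref{prop:alg} supplies the uniform bound $\mathbf{z}$, Proposition~\ref{main:theo} identifies $\conv{F_{\mathbf{z}}}$ with the $\mathbf{z}$-step reachable valuations, and upper-closedness of $\varphi$ and saturation of $\sem{P}(s)$ finish the job. The scheduler-extension step you insert (upgrading the possibly partial scheduler witnessing $\nu \in \conv{F_{\mathbf{z}}}$ to a non-blocking one before applying Theorem~\ref{prop:alg}) is a sensible elaboration of a point the paper leaves implicit, and it goes through because $\pv{P}{s}\longrightarrow$ is always finite and non-empty and the $\mathbf{z}$-step evaluation only queries finitely many histories, with Proposition~\ref{theo:det} guaranteeing the outcome is unchanged.

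Your right-to-left direction, however, diverges from the paper and is where the trouble lies. The paper applies inaccessibility of the open $\demon\varphi$ \emph{in the powerdomain $\Pow_x\PP(S)$ itself}: since $\demon\varphi$ is by construction a (sub)basic Scott-open of both $\Pow_u\PP(S)$ and $\Pow_b\PP(S)$, and $\sem{P}(s)=\bigvee_n x(F_n)$ is a directed supremum, membership of $\sem{P}(s)$ in $\demon\varphi$ forces $x(F_n)\in\demon\varphi$ for some finite stage $n$; hence $\conv{F_n}\subseteq x(F_n)\subseteq\varphi$, and Proposition~\ref{main:theo} together with non-blockingness hands every non-blocking scheduler a witness at step $n$. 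This argument is uniform in $x\in\{u,b\}$. You instead fix a scheduler, form the chain $\mu_n$, and try to place $\mu=\bigvee_n\mu_n$ inside $\sem{P}(s)$ before using openness of $\varphi$ in $\PP(S)$. That works in the upper case, where the directed supremum is the filtered intersection of the saturated sets, but in the biconvex case you only obtain $\mu\in\bigcap_n\upclos\conv{F_n}$, which need not coincide with the lens $\sem{P}(s)$ --- precisely the obstacle you flag without resolving. The gap is real for your route (it can be patched by showing $\upclos\bigl(\bigvee_n b(F_n)\bigr)=\bigcap_n\upclos\conv{F_n}$ and then using upper-closedness of $\varphi$, or more simply by switching to the paper's use of Scott-openness of $\demon\varphi$ one level up, which sidesteps the computation of suprema of lenses altogether).
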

\begin{proof}
        The left-to-right direction follows from Theorem~\ref{prop:alg},
        Proposition~\ref{main:theo} and the upper-closedness of the opens $\varphi$
        and $\demon \varphi$. The right-to-left direction follows from the
        inaccessibility of the open $\demon \varphi$ and
        Proposition~\ref{main:theo}.
\end{proof}

We can now also derive the full abstraction result mentioned in
Section~\ref{sec:intro}. Specifically given programs $P$, $Q$, and state $s$
the observational preorder $\lesssim_x$ is defined by,
\[
        \pv{P}{s} \lesssim_x \pv{Q}{s} \text{ iff }
        \Big ( \forall \phi \in \mathcal{L}_x. \,
                \pv{P}{s} \models \phi \text{ implies }
                \pv{Q}{s} \models \phi \Big )
\]
Then by taking advantage of computational adequacy we achieve full
abstraction.
\begin{corollary}
        \label{theo:adeq}
        Choose one of the three mixed powerdomains $\Pow_x \PP$ $(x \in
        \{l,u,b\})$, a program $P$, a program $Q$, and a state $s$. The
        following equivalence holds.
        \[
                \pv{P}{s} \lesssim_x \pv{Q}{s}  \text{ iff }
                \sem{P}(s) \leq_x \sem{Q}(s)
        \]
\end{corollary}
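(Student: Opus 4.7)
The plan is to reduce the corollary to the adequacy theorems (Theorems~\ref{prop:angel} and~\ref{prop:demon}) via the standard fact that the underlying order on a sober space coincides with the specialization order induced by its open subsets. Since $\Pow_x \PP(S)$ is a domain (hence sober in its Scott topology), we have $a \leq_x b$ iff every Scott-open containing $a$ also contains $b$.

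The crucial preliminary step is to match the formulae of $\mathcal{L}_x$ with the Scott-open subsets of $\Pow_x \PP(S)$. Recall from Section~\ref{sec:back} that the weak (Scott) topology on $\PP(S)$ has as subbasis the sets $\statt U$ for $U \subseteq S$ and $p \in \Rz$, while the Scott topology on $\Pow_x \PP(S)$ has as subbasis the sets of the form $\angel \varphi$ and/or $\demon \varphi$ (depending on $x$) for $\varphi$ an open of $\PP(S)$. Since $\mathcal{L}_x$ is closed under set-indexed disjunctions and finite conjunctions, and contains $\top$ and $\bot$, it generates precisely the frame of Scott-opens of $\Pow_x \PP(S)$; moreover the fragment restrictions ($\mathcal{L}_l$ forbids $\demon$, $\mathcal{L}_u$ forbids $\angel$, $\mathcal{L}_b$ allows both) match the subbasis restrictions on the Scott topology of each $\Pow_x \PP(S)$.

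With this correspondence in hand both directions become routine. For the forward implication, I assume $\pv{P}{s} \lesssim_x \pv{Q}{s}$ and take any Scott-open $\phi$ of $\Pow_x \PP(S)$ containing $\sem{P}(s)$; viewing $\phi$ as a formula of $\mathcal{L}_x$, adequacy yields $\pv{P}{s} \models \phi$, the observational preorder yields $\pv{Q}{s} \models \phi$, and a second application of adequacy gives $\sem{Q}(s) \in \phi$. Since this holds for every open, I conclude $\sem{P}(s) \leq_x \sem{Q}(s)$. The backward implication is symmetric: from $\sem{P}(s) \leq_x \sem{Q}(s)$ and $\pv{P}{s} \models \phi$ one derives $\sem{P}(s) \in \phi$ by adequacy, then $\sem{Q}(s) \in \phi$ by upper-closedness of the open $\phi$, and finally $\pv{Q}{s} \models \phi$ by adequacy again.

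I expect no substantial obstacle, since the heavy lifting -- especially the demonic direction via the K\"onig-style Theorem~\ref{prop:alg} -- has already been absorbed into Theorems~\ref{prop:angel} and~\ref{prop:demon}. The only minor care needed is to confirm that the syntactic closure properties of $\mathcal{L}_x$ together with the subbasis description of each $\Pow_x \PP(S)$ generate the same frame of opens; this is a standard frame-theoretic observation that is entirely by-the-book.
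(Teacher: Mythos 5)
Your proposal is correct and follows essentially the same route as the paper, which derives the corollary from Theorems~\ref{prop:angel} and~\ref{prop:demon} together with the fact that the order on a DCPO is the specialization order of its Scott topology (the paper cites \cite[Propositions 4.2.4 and 4.2.18]{larrecq13} for exactly the point you spell out, and the identification of $\mathcal{L}_x$-formulae with Scott-opens is the intended reading of the subbasis descriptions in Section~\ref{sec:back}). The only cosmetic remark is that sobriety is not needed: the equivalence $a \leq_x b$ iff every Scott-open containing $a$ contains $b$ holds for the Scott topology of any poset, since the complement of $\dwclos b$ is Scott-open.
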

\begin{proof}
        Follows directly from Theorem~\ref{prop:angel},
        Theorem~\ref{prop:demon}, \cite[Proposition 4.2.4
        and Proposition 4.2.18]{larrecq13}.
\end{proof}

\section{Concluding notes: semi-decidability, quantum, and future work}
\label{sec:conc}

\noindent
\textbf{Semi-decidability.}
The tradition of finding denotational counterparts to operational aspects of
programming languages is well-known and typically very fruitful (see for
example~\cite{winskel93,reynolds98}). Our work does not deviate from this
tradition and introduces the two following families of equivalences,
\[
        \pv{P}{s} \models \angel \varphi
        \text{ iff }
        \sem{P}(s) \models \angel \varphi
        \hspace{2.5cm}
        \pv{P}{s} \models \demon \varphi
        \text{ iff }
        \sem{P}(s) \models \demon \varphi
\]
for appropriate choices of mixed powerdomains $\Pow_x \PP$ ($x \in \{l,u,b\}$).
We also saw that from these it follows a full abstraction theorem w.r.t. the
observational preorder $\lesssim$ described in Section~\ref{sec:intro}. Such
results thus equip our concurrent language with a rich collection of
domain-theoretic tools that one can appeal to in the analysis of several of its
aspects. We briefly illustrate this point next with one example.

Statistical termination is a topic that has been extensively studied over the
years in probability theory~\cite{hart83,lengal17,majumdar24}.  It started
gaining traction recently in the quantum setting as well~\cite{fu24}.  Here we
prove semi-decidability w.r.t. a main class of may and must statistical
termination, an apparently surprising result for it involves quantifications
over the uncountable set of probabilistic schedulers.

We first need to introduce some basic assumptions due to concurrent pGCL being
parametrised.  Take a program $P$ and a state $s$.  We assume that for every
natural number $n \in \Nats$ one can always compute via
Equations~\eqref{eq:fin} the finite set $F_n$ that generates
$\mathrm{ext}_n(\asem{P}_n)(s)$ (recall the end of Section~\ref{sec:den}).
Specifically we assume that every interpretation $\sem{\prog{a}}$ of an atomic
program is computable and only returns linear combinations whose scalars are
rational numbers. Similarly all interpretations $\sem{\prog{b}}$ of conditions
must be decidable. Finally given a subset $U \subseteq S$ we assume that the
membership function $\in_U \,: S \to \{\mathtt{tt},\mathtt{ff}\}$ w.r.t. $U$ is
decidable.

We start with may statistical termination as given by the
formula $\angel \statt U$ with $p$  any number in the set $[0,1) \cap
\Rats$. For this case we fix the mixed powerdomain in our denotational
semantics to be the lower convex one. Next, recall that the statement $\pv{P}{s}
\models \angel \statt U$ refers to the existence of a scheduler
under which $\pv{P}{s}$ terminates in $U$ with probability strictly greater
than $p$. Our algorithmic procedure for checking whether such a
statement holds is to exhaustively search in the finite sets $F_1,F_2,\dots$
for a valuation $\mu$ that satisfies $\statt U$.  Such a procedure is
computable due to the assumptions above, and in order to prove 
semi-decidability we show next that it eventually
terminates whenever $\pv{P}{s} \models \angel \statt U$ holds.

Let us thus assume that $\pv{P}{s} \models \angel \statt U$ holds.  It follows
from Theorem~\ref{prop:angel} and the fact that $\angel \statt U$ corresponds
to an open set of the lower convex powerdomain that there exists a natural
number $n \in \Nats$ such that $\mathrm{ext}_n(\asem{P}_n)(s) \models \angel
\statt U$. In other words the set $\overline{\conv{F_n}}$ contains a valuation
that satisfies $\statt U$ and by the characterisation of Scott-closure in
domains~\cite[Exercise 5.1.14]{larrecq13} there also exists a valuation $\mu
\in \conv{F_n}$ that satisfies $\statt U$.  We will show by contradiction that
such a valuation exists in $F_n$ as well which proves our claim. By the
definition of convex closure we know that $\mu = \sum_{i \in I} p_i \cdot
\mu_i$ for some finite convex combination of valuations in $F_n$. Suppose that
none of the valuations $\mu_i$ $(i \in I)$ satisfy $\mu_i(U) > p$. From this
supposition we take the valuation $\mu_i$ ($i \in I$) with the \emph{largest}
value $\mu_i(U)$, denote it by $\rho$, and clearly $\rho(U) \leq p$.  This
entails $\sum_i p_i \cdot \rho(U) \leq p$ but $\sum_i p_i \cdot \rho(U) \geq
\sum_i p_i \cdot \mu_i(U) > p$, a contradiction.

We now focus on must statistical termination as given by $\demon \statt U$ with
$p \in [0,1) \cap \Rats$. For this case we fix the mixed powerdomain in
our denotational semantics to be the upper convex one. Recall that the
statement $\pv{P}{s} \models \demon \statt U$ asserts that $\pv{P}{s}$
terminates in $U$ with probability strictly greater than $p$ under all
non-blocking schedulers. Our algorithmic procedure for checking whether such a
statement holds is in some sense dual to the previous one: we exhaustively
search for a finite set $F_1,F_2,\dots$ in which all valuations satisfy $\statt
U$. As before such a procedure is computable, and to achieve semi-decidability
we will prove that it eventually terminates whenever $\pv{P}{s} \models \demon
\statt U$ holds.  Thus assume that the latter holds.  It follows from
Theorem~\ref{prop:demon} and the fact that $\demon \statt U$ corresponds to
an open set of the upper convex powerdomain that there exists a natural number
$n \in \Nats$ such that $\mathrm{ext}_n(\asem{P}_n)(s) \models \demon \statt
U$. All valuations in $\mathord{\uparrow}\conv{F_n}$ thus satisfy $\statt U$
and therefore all valuations in $F_n$ satisfy $\statt U$ as well.  

Our reasoning for semi-decidability w.r.t. may statistical termination scales
up to formulae of the type $\angel (\statto{1} U_1 \vee \cdots \vee \statto{n}
U_n)$ with all numbers $p_i \in [0,1) \cap \Rats$.  Dually our reasoning
w.r.t the must variant extends to formulae of the type $\demon (\statto{1} U_1
\wedge \cdots \wedge \statto{n} U_n)$ with all numbers $p_i \in [0,1)
\cap \Rats$.

\newcommand{\bra}[1]{\langle#1|} 
\newcommand{\ket}[1]{|#1\rangle} 
\newcommand{\braket}[2]{ \langle #1 | #2 \rangle} 
\newcommand{\ketbra}[2]{ | #1 \rangle \! \langle #2 |} 

\smallskip
\noindent
\textbf{An application to concurrent quantum computation}.
%
We now apply our results to quantum concurrency, specifically we instantiate
the language in Section~\ref{sec:lang} to the quantum
setting~\cite{nielsen02,watrous18} and obtain computational adequacy for free.
Previous works have already introduced denotational semantics to sequential,
non-deterministic quantum programs~\cite{feng23,feng23b}.  Unlike us however
they do not involve powerdomain structures, although they do mention such would
be more elegant.  Also as far as we aware they do not establish any connection
to an operational semantics, and thus our results are novel already for the
sequential fragment.

We first present our concurrent quantum language, and subsequently the
corresponding state space and interpretation of atomic programs.  We assume
that we have at our disposal $n$ bits and $m$ qubits. We use
$\prog{x_1,x_2,\dots,x_n}$ to identify each bit available and analogously for
$\prog{q_1,q_2,\dots,q_m}$ and qubits. As for the atomic programs we postulate
a collection of gate operations $\prog{U(\vec{q})}$ where $\prog{\vec{q}}$ is a
list of qubits; we also have qubit resets $\prog{q_i \leftarrow} \ket{0}$ ($1
\leq \prog{i} \leq m$) and measurements $\prog{M[x_i \leftarrow q_j]}$ of the
$\prog{j}$-th qubit ($1 \leq \prog{j} \leq m$) with the outcome stored in the
$\prog{i}$-th bit ($1 \leq \prog{i} \leq n$).  The conditions $\prog{b}$ are
set as the elements of the free Boolean algebra generated by the equations
$\prog{x_i = 0}$ and $\prog{x_i = 1}$ for $0 \leq \prog{i} \leq n$.

Recall that a (pure) 2-dimensional quantum state is a unit vector $\ket{\psi}$
in $\Complex^2$, usually represented as a density operator $\ketbra{\psi}{\psi}
\in \Complex^{2 \times 2}$. Here we are particularly interested on the
so-called classical-quantum states~\cite{watrous18}.  They take the
form of a convex combination,
\begin{align}
        \label{eq:cqs}
        \textstyle{\sum_i p_i} \cdot \ketbra{x_i}{x_i} \otimes \ketbra{\psi_i}{\psi_i}
\end{align}
where each $x_i$ is an element of $2^n$ (\ie\ a classical state) and each
$\ket{\psi_i}$ is a pure quantum state in $\Complex^{2^{\otimes m}}$. Such
elements are thus distributions of $n$-bit states $\ketbra{x_i}{x_i}$ paired
with $m$-qubit states $\ketbra{\psi_i}{\psi_i}$. The state space $S$ that
we adopt is the subset of $\Complex^{{2 \times 2}^{\otimes n}} \otimes
\Complex^{{2 \times 2}^{\otimes m}}$ that contains precisely the elements of
the form $1 \cdot \ketbra{x}{x} \otimes \ketbra{\psi}{\psi}$ as previously
described -- \ie\ we know with certainty that $\ketbra{x}{x} \otimes
\ketbra{\psi}{\psi}$ is the current state of our classical-quantum system. We
call such states \emph{pure} classical-quantum states.

We then interpret atomic programs as maps $S \to \Dist(S)$ which as indicated
by their signature send pure classical-quantum states into arbitrary
ones~\eqref{eq:cqs}.  More technically we interpret such programs as
restrictions of completely positive trace-preserving operators, a standard
approach in the field of quantum information~\cite{watrous18}. Completely
positive trace-preserving operators are often called quantum channels and we
will use this terminology throughout the section. 

We will need a few preliminaries for interpreting the atomic programs. First a
very useful quantum channel is the trace operation $\mathrm{Tr} : \Complex^{{2
\times 2}^{\otimes n}} \to \Complex^{{2 \times 2}^{\otimes 0}} = \Complex$
which returns the trace of a given matrix~\cite[Corollary 2.19]{watrous18}.
Among other things it induces the partial trace on $m$-qubits,
\[
        \mathrm{Tr}_i := (\otimes^{i -1}_{j = 1} \id) \otimes \mathrm{Tr} \otimes 
        (\otimes^{m}_{j = i + 1}\, \id)
\]
which operationally speaking discards the $i$-th qubit. It is also easy to see
that for every density operator $\rho \in \Complex^{{2\times 2}^{\otimes n}}$
the map $\rho : \Complex^{{2\times 2}^{\otimes 0}} \to \Complex^{{2\times
2}^{\otimes n}}$ defined by $1 \mapsto \rho$ is a quantum
channel~\cite[Proposition 2.17]{watrous18}. Finally note that one can always
switch the positions $i$ and $j$ of two qubits~\cite[Corollary 2.27]{watrous18}
in a pure quantum state.

We now present the interpretation of atomic programs. First for each gate
operation $\prog{U(\vec{q})}$ we postulate the existence of a unitary operator
$U : \Complex^{2^{\otimes m}} \to \Complex^{2^{\otimes m}}$. Then we interpret
gate operations and resets by,
\begin{align*}
        \sem{\prog{U(\vec{q})}}\big (\ketbra{x}{x} \otimes \ketbra{\psi}{\psi} \big)
        & = \ketbra{x}{x} \otimes U\ketbra{\psi}{\psi}U^\dagger
        \\
        \sem{\prog{q_i} \leftarrow \ket{0}} 
        \big (\ketbra{x}{x} \otimes \ketbra{\psi}{\psi} \big)
        & = \ketbra{x}{x} \otimes \mathrm{mov_i} \big (\ketbra{0}{0} \otimes 
        \mathrm{Tr_i}\ketbra{\psi}{\psi} \big )
\end{align*}
where $U^\dagger$ is the adjoint of $U$ and $\mathrm{mov_i}$ is the operator
that moves the leftmost state (in this case $\ketbra{0}{0}$) to the
$\prog{i}$-th position in the list of qubits. In order to interpret
measurements we will need a few extra auxiliary operators. First we take the
`quantum-to-classical' channel $\Phi : \Complex^{2 \times 2} \to \Complex^{{2
\times 2}^{\otimes 2}}$ defined by,
\[
        \begin{pmatrix}
                a && b 
                \\
                c && d
        \end{pmatrix}
        \mapsto
        a \cdot \ketbra{0}{0}^{\otimes 2}
        + 
        d \cdot \ketbra{1}{1}^{\otimes 2}
\]
The fact that it is indeed a channel follows from~\cite[Theorem
2.37]{watrous18}. In words $\Phi(\rho)$ is a measurement of $\rho$ w.r.t.  the
computational basis: the outcome $\ket{0}$ is obtained with probability $a$ and
analogously for $\ket{1}$. In case we measure $\ket{0}$ we return
$\ketbra{0}{0}^{\otimes 2}$ and analogously for $\ket{1}$. Note the use of
$\ketbra{0}{0}^{\otimes 2}$ (and not $\ketbra{0}{0}$) so that we can later
store a copy of $\ketbra{0}{0}$ in the classical register. To keep the notation
easy to read we use $\Phi_i$ to denote the operator that applies $\Phi$ in the
$i$-th position and the identity everywhere else. Finally we have,
\[
        \sem{\prog{M[x_i \leftarrow q_j}]} ( \ketbra{x}{x} \otimes \ketbra{\psi}{\psi} )
        = 
        \mathrm{mov}^{\mathrm{cq}}_\prog{j,i} \left (\mathrm{Tr_\prog{i}}\ketbra{x}{x} 
                \otimes \Phi_\prog{j}\ketbra{\psi}{\psi}
        \right )
\]
where $\mathrm{mov}^\mathrm{cq}_\prog{j,i}$ sends the $\prog{j}$-th qubit to
the $\prog{i}$-th position in the list of bits. 


\bigskip
\noindent
\textbf{Future work}.
We plan to explore a number of research lines that stem directly from our work.
For example we would like to expand our (brief) study about semi-decidability
w.r.t. $\pv{P}{s} \models \phi$. More specifically we would like to determine
the largest class of formulae $\phi$ in our logic (recall
Section~\ref{sec:lang}) under which one can prove semi-decidability.  Second it
is well-known that under mild conditions the three mixed powerdomains are
isomorphic to the so-called prevision models~\cite{gobault15}. In other words
via Section~\ref{sec:den} and Section~\ref{sec:adeq} we obtain for free a
connection between our (concurrent) language and yet another set of
domain-theoretic tools.  Will this unexplored connection provide new insights
about the language? It would also be interesting to investigate in what ways a
semantics based on event structures~\cite{varacca06b} complements the one
presented here.

We also plan to extend concurrent pGCL in different directions.  The notion of
fair scheduling for example could be taken into account which leads to
countable non-determinism. We conjecture that the notion of a mixed powerdomain
will need to be adjusted to this new setting, similarly to what already happens
when probabilities are not involved~\cite{apt86}.  Another appealing case is
the extension of our concurrent language and associated results to the
higher-order setting, obtaining a language similar in spirit to concurrent
idealised Algol~\cite{brookes96}. A promising basis for this is the general
theory of PCF combined with algebraic
theories~\cite{plotkin01,plotkin03,plotkin08}. In our case the algebraic theory
adopted would need to be a combination of that of states and that of mixed
non-determinism. 

\bigskip
\noindent
\textbf{Acknowledgements.} This work is financed by National Funds through FCT
- Fundação para a Ciência e a Tecnologia, I.P. (Portuguese Foundation for
Science and Technology) within project IBEX, reference
10.54499/PTDC/CCI-COM/4280/2021
(https://doi.org/10.54499/PTDC/CCI-COM/4280/2021).

\bibliographystyle{eptcs}
\bibliography{biblio}

\end{document}